\documentclass[acmsmall,screen,nonacm]{acmart}
\usepackage{amsmath}
\usepackage{booktabs}   
\usepackage{subcaption}
\usepackage{enumerate}
\usepackage{tikz}
\usetikzlibrary{backgrounds, bending}
\usetikzlibrary{shapes.geometric}
\usepackage{listings}
\usepackage{csvsimple} 
\usepackage{multirow}
\usepackage{amsfonts}
\usepackage{url}
\usepackage{babel}
\usepackage{mathrsfs}
\usepackage[strings]{underscore}
\usepackage{enumerate,paralist}
\usepackage{lstautogobble}
\usepackage{makecell}
\usepackage{pifont}
\usepackage{paralist}
\usepackage{hyperref}
\usepackage{microtype}
\usepackage{float}
\usepackage{enumitem}
\usepackage{hhline}
\usepackage{tabularx}
\usepackage[table]{xcolor}
\usepackage{diagbox}
\usepackage[T1]{fontenc}
\usepackage{graphicx}

\newtheorem*{theorem*}{Theorem}
\newtheorem*{lemma*}{Lemma}
\newtheorem{remark}{Remark}
\newtheorem{example}{Example}
\newtheorem*{example*}{Example}
\AtBeginDocument{%
  }

\setcopyright{acmlicensed}
\copyrightyear{2018}
\acmYear{2018}
\acmDOI{XXXXXXX.XXXXXXX}

\acmConference[PLDI '26]{June 15--19,
  2026}{Boulder, Colorado}

\newcommand{\eps}{\epsilon}
\newcommand{\R}{\mathbb{R}}

\newcommand{\N}{\mathbb{N}}
\newcommand{\I}{\mathbb{I}}

\newcommand{\support}{\mathit{supp}}

\newcommand{\pCFG}{\mathcal{C}}
\newcommand{\locinit}{\loc_{\mathit{in}}}
\newcommand{\locterm}{\loc_{\mathit{out}}}

\newcommand{\vecinit}{\mathbf{x}_{\mathit{in}}}
\newcommand{\thetainit}{\theta_{\mathit{in}}}

\newcommand{\updates}{\mathit{Up}}

\newcommand{\Run}{\mathit{Run}}

\newcommand{\transitions}{\mapsto}

\newcommand{\guards}{G}
\newcommand{\prob}{\mathit{Pr}}

\newcommand{\pvars}{\mathcal{V}}
\newcommand{\locs}{\mathit{L}}
\newcommand{\loc}{\ell}

\newcommand{\probm}{\mathbb{P}}
\newcommand{\E}{\mathbb{E}}

\newcommand{\val}{\mathbf{x}}

\newcommand{\timeterm}{\mathit{TimeTerm}}
\newcommand{\Termset}{\mathit{Term}}
\newcommand{\Output}{\mathit{Output}}

\newcommand{\out}{\mathit{out}}

\newcommand{\pcfg}{\pCFG}

\newcommand{\similar}{\sim_{\Phi}}

\renewcommand{\H}{\mathbb{H}}
\renewcommand{\L}{\mathcal{L}}
\newcommand{\U}{\mathcal{U}}
\newcommand{\distrs}{\mathcal{D}}
\newcommand{\Vout}{\pvars_{\mathit{out}}}

\newcommand{\parag}[1]{\smallskip\indent{\em {#1}}}
\begin{document}

\title{SuperDP: Differential Privacy Refutation via Supermartingales}

\author{Krishnendu Chatterjee}
\email{Krishnendu.Chatterjee@ist.ac.at}
\affiliation{
\institution{Institute of Science and Technology Austria (ISTA)}
\city{Klosterneuburg}
\country{Austria}
}
\author{Ehsan Kafshdar Goharshady}
\email{ehsan.goharshady@ist.ac.at}
\affiliation{
\institution{Institute of Science and Technology Austria (ISTA)}
\city{Klosterneuburg}
\country{Austria}
}
\author{\DJ or\dj e \'Zikeli\'c}
\email{dzikelic@smu.edu.sg}
\affiliation{
\institution{Singapore Management University (SMU)}
\city{Singapore}
\country{Singapore}
}


\begin{abstract}
  Differential privacy (DP) has established itself as one of the standards for ensuring privacy of individual data. However, reasoning about DP is a challenging and error-prone task, hence methods for formal verification and refutation of DP properties have received significant interest in recent years. In this work, we present a novel method for automated formal refutation of $\epsilon$-DP. Our method refutes $\epsilon$-DP by searching for a pair of inputs together with a non-negative function over outputs whose expected value on these two inputs differs by a significant amount. The two inputs and the non-negative function over outputs are computed simultaneously, by utilizing upper expectation supermartingales and lower expectation submartingales from probabilistic program analysis, which we leverage to introduce a sound and complete proof rule for $\epsilon$-DP refutation. To the best of our knowledge, our method is the first method for $\epsilon$-DP refutation to offer the following four desirable features: (1)~it is fully automated, (2)~it is applicable to stochastic mechanisms with sampling instructions from both discrete and continuous distributions, (3)~it provides soundness guarantees, and (4)~it provides semi-completeness guarantees. Our experiments show that our prototype tool SuperDP achieves superior performance compared to the state of the art and manages to refute $\epsilon$-DP for a number of challenging examples collected from the literature, including ones that were out of the reach of prior methods.
\end{abstract}

\maketitle

\section{Introduction}\label{sec:intro}


\parag{Differential privacy.} Differential privacy is a formal notion of privacy that tackles the problem of extracting statistical information about a dataset, without compromising the privacy of individual data subjects~\cite{Dwork06,DworkR14}. It has gained great prominence over the recent years as one of the standards for ensuring privacy in (randomized) computation~\cite{BittauEMMRLRKTS17,DingKY17,JohnsonNS18}. Intuitively, a randomized algorithm or a stochastic mechanism is said to be differentially private, if slightly changing the input of the algorithm (e.g.~changing the data of a single user) does not significantly affect the probability distribution of outputs. Formally, given a probabilistic program $P$ modeling the randomized computation, a similarity relation $\sim_\Phi$ over its inputs, and a privacy budget $\epsilon \geq 0$, $P$ is said to be {\em $\epsilon$-differentially private ($\epsilon$-DP)} if for every pair of similar inputs $\vecinit^1 \sim_\Phi \vecinit^2$ the resulting output distributions of $P$ are not too distant from each other, i.e.~the inequality
\begin{equation}\label{eq:cond}
    \mathbb{P}_{\vecinit^1}\Big[\Output(A)\Big] \leq e^\eps \cdot \mathbb{P}_{\vecinit^2}\Big[\Output(A)\Big],
\end{equation}
holds for every set of outputs $A$ in $P$. Smaller values of $\epsilon$ provide stronger privacy guarantees, with $\epsilon = 0$ ensuring perfect privacy and the indistinguishability of computation due to similar inputs.

\parag{Correctness of differential privacy.} While DP provides an intuitive and mathematically clean definition of privacy, it turns out that reasoning about it is a difficult task. For instance, minor tweaks to correct mechanisms can break differential privacy~\cite{DworkNRRV09,LyuSL17}. Moreover, as any non-trivial property of (probabilistic) programs, the problem of DP checking for Turing-complete probabilistic programming languages is undecidable~\cite{BartheCJS020}. 
This inherent difficulty, together with the importance and wide adoption of differential privacy as one of the standards for private computation, lead to a significant interest and research effort towards enabling automated reasoning about DP.

Automated reasoning about DP can be divided into two complementary problems -- (1)~{\em DP verification}, i.e.~the problem of proving that a stochastic mechanism is differentially private, and (2)~{\em DP refutation}, i.e.~the problem of proving that a stochastic mechanism is {\em not} differentially private. Due to the undecidability of DP checking, one cannot hope for a single algorithm that can verify and refute DP for all stochastic mechanisms, hence these two problems require different approaches.

The DP verification problem has received significant attention, see e.g.~\cite{BartheCJS020,BunGG22,BartheKOB13,BartheGGHS16,WangDKZ20,AvanziniBDG25}, and we discuss existing methods and approaches in Section~\ref{sec:relatedwork}. The DP refutation problem, on the other hand,  has received comparatively less attention. Existing automated methods for DP refutation can be classified into static and dynamic methods. Dynamic methods aim to refute DP by sampling program executions and deriving statistical lower bounds on distances between output distributions on different inputs. If these distances are sufficiently significant, one can conclude that the mechanism is not DP with statistical guarantees~\cite{DingWWZK18}. Static methods, on the other hand, aim for formal guarantees by performing static analysis of probabilistic programs that implement the stochastic mechanism of interest. However, most existing static analyses for DP refutation are either not fully automated, or are restricted to finite or countable programs that do not allow real-valued program variables or sampling from continuous distributions~\cite{FarinaCG21,BartheCJS020}, which is a significant limitation given that continuous distributions such as Laplacians are standard in many DP mechanisms used in practice~\cite{Dwork06,DworkR14,opendp}. A notable exception is CheckDP~\cite{WangDKZ20} (successor of LigthDP~\cite{ZhangK17} and ShadowDP~\cite{WangDWKZ19}), which searches for two program inputs whose respective executions cannot be aligned to yield the same output without changing the program by much. They then use the off-the-shelf exact inference tool PSI~\cite{GehrMV16} to formally prove that $\epsilon$-DP is violated for the identified pair of inputs.


\parag{Limitations of previous approaches.} While the above works present significant advances in automated reasoning about DP, the existing support for automated and formal DP refutation is still limited and DP refutation remains a challenging problem. In particular, to the best of our knowledge, no existing method for DP refutation provides all of the following desirable features:
\begin{compactenum}
    \item {\em Automation.} We are interested in fully automated methods for DP refutation.
    \item {\em Support for general probabilistic programs.} We are interested in methods that are applicable to stochastic mechanisms that allow sampling from both discrete and continuous distributions. As discussed above, this is highly important because continuous distributions, such as Laplacians, are standard in many classical DP mechanisms~\cite{Dwork06,DworkR14,opendp}.
    \item {\em Soundness guarantees.} We are interested in static analysis methods that can provide formal guarantees on the correctness of DP refutations that they report.
    \item {\em Semi-completeness guarantees.} Given the undecidable nature of DP checking, one cannot hope for an algorithm which is sound and complete. However, as is the case in many static program analyses such as termination~\cite{ColonS01,PodelskiR04,PodelskiR04-2} or safety~\cite{ColonSS03,ChatterjeeFGG2020} proving, it can be beneficial to provide semi-completeness guarantees, i.e.~a set of conditions under which a method is guaranteed to refute DP. This is particularly important in the context of DP analysis, given how sensitive DP mechansisms are to even minor tweaks~\cite{DworkNRRV09,LyuSL17}. CheckDP~\cite{WangDKZ20}, which is the only method that satisfies the above three features, does not provide any form of semi-completeness guarantees.
\end{compactenum}

\parag{Our contributions.} In this work, we propose a method for $\epsilon$-DP refutation that provides all four of the desirable features discussed above. The key idea behind our method is to search for a pair of similar inputs $\vecinit^1 \sim_\Phi \vecinit^2$ for which $\epsilon$-DP is violated by showing that they lead to "expectation mismatch". That is, rather than explicitly searching for an output event $A$ for which eq.~\eqref{eq:cond} is violated, our method instead searches for a non-negative function $f$ over the outputs of the stochastic mechanism for which
\begin{equation}\label{eq:cond2}
    \mathbb{E}_{\vecinit^1}[f] > e^\epsilon \cdot \mathbb{E}_{\vecinit^2}[f].
\end{equation}
We prove that this is both a necessary and sufficient condition for the pair of inputs to violate $\epsilon$-DP.

A challenge in computing such a function $f$ is that the exact computation of the expected value of $f$ on output is in general infeasible, as it may not always admit a closed-form expression~\cite{KaminskiKMO18}. To overcome this challenge, our method instead computes a pair of similar inputs $\vecinit^1 \sim_\Phi \vecinit^2$ and a function $f$ {\em simultaneously and together with} a lower bound $L \leq \mathbb{E}_{\vecinit^1}[f]$ and an upper bound $U \geq \mathbb{E}_{\vecinit^2}[f]$ on the two expected values, while in addition enforcing $L > e^{\epsilon} \cdot U$. These three inequalities together entail the inequality in eq.~\eqref{eq:cond2} and hence refute $\epsilon$-DP.

To reason about and automatically compute upper and lower bounds on the expected value of $f$ on output, we leverage the notions of upper expectation supermartingales (UESMs) and lower expectation submartingales (LESMs), which were shown to provide sound and complete proof rules for computing upper and lower bounds on the expected value of a function on output~\cite{ChatterjeeGNZ24,ChatterjeeGNZ25}. This allows us to formulate what we call {\em expectation super/sub-martingale witness for $\epsilon$-DP refutation}, which are tuples $(\vecinit^1,\vecinit^2,f,L_f,U_f)$ that consist of a pair of similar inputs, a non-negative function $f$ over outputs, and a U/LESM pair which give rise to expectation bounds that together imply eq.~\eqref{eq:cond2}. We show that our expectation super/sub-martingale witness for $\epsilon$-DP refutation provide a {\em sound and complete proof rule} for $\epsilon$-DP refutation in stochastic mechanisms that can be modeled as probabilistic programs that allow general Borel-measurable arithmetic expressions and sampling instructions from both discrete and continuous probability distributions. We then use this sound and complete proof rule to design a {\em sound and conditionally semi-complete algorithm} for $\epsilon$-DP refutation. Our algorithm is applicable to stochastic mechanisms that can be modeled as polynomial arithmetic probabilistic programs (while allowing sampling instructions from distributions whose probability density function is not polynomial, e.g.~Laplacian or normal distributions), and it follows a template-based synthesis approach to compute an instance of an expectation super/sub-martingale witness for $\epsilon$-DP refutation which can be specified in the real-valued polynomial arithmetic. Finally, we develop and experimentally evaluate a prototype tool SuperDP ({\bf Super}martingale-based {\bf D}ifferential {\bf P}rivacy refutation), and our experiments demonstrate our method's practical applicability and favourable performance compared to the state of the art on automated $\epsilon$-DP refutation.

Our contributions can be summarized as follows:
\begin{compactenum}
    \item {\em Theory: Sound and complete proof rule for $\epsilon$-DP refutation.} We introduce {\em expectation super/sub-martingale witness for $\epsilon$-DP refutation}, a sound and complete proof rule for $\epsilon$-DP refutation in stochastic mechanisms that can be modeled as probabilistic programs that allow general Borel-measurable arithmetic expressions and sampling instructions from both discrete and continuous probability distributions.
    \item {\em Automation: Sound and conditionally semi-complete algorithm for $\epsilon$-DP refutation.} We design a sound and conditionally semi-complete algorithm for $\epsilon$-DP refutation in stochastic mechanisms that can be modeled as polynomial arithmetic probabilistic programs.
    \item {\em Experimental evaluation.} Our experiments show that our sound and conditionally semi-complete algorithm is able to refute $\epsilon$-DP in a number of challenging examples collected form the literature, including examples that no prior automated method could handle.
\end{compactenum}

\section{Overview}\label{sec:overview}

To illustrate the key ideas behind our method for refuting DP, we will consider the randomized response mechanism in Fig.~\ref{fig:RR-1} and the histogram mechanism in Fig.~\ref{fig:histogram}, both of which are basic and fundamental mechanisms in the DP literature~\cite{DworkR14}. In both cases, we use probabilistic programs (PPs) as a language for specifying stochastic mechanisms whose DP we wish to analyze. PPs are classical imperative or functional programs extended with the ability to sample values from probability distributions and to assign sampled values to program variables, and they provide a general expressive framework for modeling and specifying stochastic models and protocols~\cite{Ghahramani15,MeentPYW18}.

\begin{figure}[t]
	\centering
	\begin{subfigure}[t]{0.47\textwidth}
		\begin{lstlisting}[frame=single,numbers=none]
Input: $x \in \{0,1\} \wedge out = 0$
Sim: $(out^1 = out^2 = 0)$
$l_1$:  if prob(0.5) then
$l_2$:    $out$ := $x$
   else
$l_3$:    if prob(0.5) then
$l_4$:      $out$ := 0
     else
$l_5$:      $out$ := 1
$l_t$:
Output: $out$
		\end{lstlisting}
        \caption{Randomized response mechanism (\texttt{RR-1})~\cite{DworkR14}: provides plausible deniability by ensuring $(\ln{3})$-DP.}
        \label{fig:RR-1}
	\end{subfigure}%
\hfill
\begin{subfigure}[t]{0.47\textwidth}
		\begin{lstlisting}[frame=single,numbers=none]
Input: $q \in \R^n$
Sim: ($\eta^1 = \eta^2 = 0) \wedge (\forall i. \, |q^1_i - q^2_i| \leq 1$)
      $ \wedge \, (\forall i. 
    \, q^1_i \neq q^2_i \Rightarrow \forall j\neq i. \, q^1_j= q^2_j)$
$l_1$:  for $i \in [0,n)$:
$l_2$:    $\eta$ := laplace($1$)
$l_3$:    $q_i$ := $q_i + \eta$
$l_t$:
            \end{lstlisting}
        \begin{lstlisting}[frame=single,numbers=none]
$l_1$: $\eta$ := laplace($1$)
$l_2$: $q_0$ := $q_0 + \eta$
$l_t$:
Output: $q$
            \end{lstlisting}
        \caption{Histogram mechanism \cite{DworkR14} (top) and its unrolling for $n=1$ (bottom): given true query answers $q$, adds independent Laplacian perturbation to each element to satisfy $1$-DP.}
        \label{fig:histogram}
	\end{subfigure}%
        \caption{Our two running examples for illustrating our approach to refuting $\epsilon$-DP. In both cases, we use \textsc{Input} to denote the predicate that defines possible program inputs, \textsc{Sim} to denote the similarity relation over inputs, and $\textsc{Out}$ to denote output program variables.}
        \label{example_ts}
\end{figure}

\begin{example}[Running example: Randomized response mechanism]
    The PP in Fig.~\ref{fig:RR-1} implements the well-known "randomized response" mechanism~\cite{DworkR14}. This mechanism allows individuals to respond to sensitive queries while providing them with plausible deniability. Suppose that input $x$ is sensitive information, e.g. whether or not a surveyed student has cheated in an exam. The student is asked to toss a fair coin (line $l_1$), answer truthfully if it lands on heads (line $l_2$), and otherwise answer uniformly at random by throwing another fair coin (lines $l_3$ to $l_5$). This way, even if the student responds 'yes', they can credibly deny their cheating by claiming that the first coin landed on tails. 
\end{example}

\begin{example}[Running example: Histogram mechanism]
    The PP in Fig.~\ref{fig:histogram} implements the histogram mechanism introduced in~\cite{DworkR14}. The input to this mechanism is a vector of true answers to some number of queries $q$ about a database (e.g.~a query $q_i$ may ask for the number of database elements that are equal to some given value). The goal of the histogram mechanism is to output the query answers privately by adding noise to them, where the noise is sampled according to the Laplacian distribution with mean $0$ and scale $\frac{1}{\epsilon}$. Here, we consider $\epsilon=1$ for illustration.
\end{example}

\parag{Differential privacy refutation.} Let $P$ be a PP that models a stochastic mechanism whose differential privacy we wish to analyze. Intuitively, we say that $P$ is DP, if for any pair of inputs to $P$ that are sufficiently "similar", the probability of every output event on these two inputs is also similar. 

This intuition is formalized as follows. Let $\pvars$ be the set of program variables in $P$ with the set of output variables $\Vout \subseteq \pvars$, $\thetainit \subseteq \R^{|\pvars|}$ be the set of all inputs to $P$, and $\Phi \subseteq \thetainit \times \thetainit$ be a similarity relation over the inputs that specifies which input pairs are deemed to be "similar". For a pair of inputs $\vecinit^1,\vecinit^2\in\thetainit$, we write $\vecinit^1 \sim_\Phi \vecinit^2$ to denote that $(\vecinit^1,\vecinit^2) \in \Phi$. Then, for $\epsilon \geq 0$, we say that the PP $P$ is {\em $\epsilon$-differentially private ($\epsilon$-DP)}, if for any pair of similar inputs $\vecinit^1 \sim_\Phi \vecinit^2$ and for any event $A$ over the PP outputs, we have
\begin{equation*}
    \mathbb{P}_{\vecinit^1}\Big[\Output(A)\Big] \leq e^\epsilon \cdot \mathbb{P}_{\vecinit^2}\Big[\Output(A)\Big],
\end{equation*}
where we use $\mathbb{P}_{\vecinit^1}$ and $\mathbb{P}_{\vecinit^2}$ to denote the probability measures induced by the PP on these two inputs. In words, the probability of any output event on two similar inputs may differ by a multiplicative factor of at most $e^\epsilon$. Given a PP $P$, a similarity relation $\Phi$ over its inputs and a privacy budget $\epsilon \geq 0$, the DP refutation problem is concerned with formally proving that $P$ is not $\epsilon$-DP.

\begin{example}\label{ex:dpoverview}
    The randomized response mechanism in Fig.~\ref{fig:RR-1} is $\epsilon$-DP for $\epsilon = \ln 3$, but not $\epsilon$-DP for $\epsilon < \ln 3$~\cite{DworkR14}.
    Indeed, Table~\ref{tab:dist-RR-1} shows the output distribution of the randomized response mechanism for each possible input pair $(x,\out)$. Based on the data in the table, it can be verified by inspection that $\mathbb{P}_{\vecinit^1}[\Output(A)] \leq e^{\ln 3} \cdot \mathbb{P}_{\vecinit^2}[\Output(A)]$ holds for all similar inputs $\vecinit^1 = (x^1,\out^1), \vecinit^2 = (x^2,\out^2) \in \{0,1\}^2$ and for all output events $A$. Hence, the mechanism is $(\ln 3)$-DP. However, the mechanism is {\em not} $\epsilon$-DP for $\epsilon < \ln 3$. This is because, for the pair of similar inputs $(x^1,\out^1) = (1,0)$ and $(x^2,\out^2) = (0,0)$ and the output event $A = (\out = 1)$, we have $\mathbb{P}_{\vecinit^1}[\Output(A)] = e^{\ln 3} \cdot \mathbb{P}_{\vecinit^2}[\Output(A)]$ so the inequality is tight for $\epsilon = \ln 3$ and would be violated for $\epsilon < \ln 3$.
    
    \begin{table}[t]
    \begin{center}   
    \begin{tabular}[h]{c|c|c}
         \diagbox[width=1.1cm, height=0.5cm]{\scriptsize in}{\scriptsize out}& $0$ & $1$ \\
         \hline
         $(0,0)$ & $0.75$ & $0.25$ \rule{0pt}{10pt} \\
         \hline 
         $(1,0)$ & $0.25$ & $0.75$ \rule{0pt}{10pt} \\
    \end{tabular}
    \end{center}
    \caption{Output distribution of the randomized response mechanism in Fig. \ref{fig:RR-1}.}
    \label{tab:dist-RR-1}
    \end{table}

    On the other hand, it is a classical result in the literature on differential privacy that the histogram mechanism in Fig.~\ref{fig:histogram} is $1$-DP but not $\epsilon$-DP for $\epsilon < 1$ \cite{Dwork06}. Intuitively, each query answer is perturbed independently by Laplace noise with scale $1$ to mask the query answer, so the mechanism is $1$-DP but not $\epsilon$-DP for any $\epsilon < 1$.
\end{example}

\parag{Our approach: DP refutation via expectation mismatch.} By the above definition, to refute $\epsilon$-DP of $P$, it suffices to find a pair of similar inputs $\vecinit^1 \sim_\Phi \vecinit^2$ and an output event $A$ such that $\mathbb{P}_{\vecinit^1}[\Output(A)] > e^\epsilon \cdot \mathbb{P}_{\vecinit^2}[\Output(A)]$. Our method for DP refutation explicitly computes such a pair of similar inputs $\vecinit^1 \sim_\Phi \vecinit^2$. However, rather than also searching for an output event $A$, it instead computes a non-negative function $f: \R^{|\Vout|} \rightarrow \R$ over PP outputs for which
\begin{equation}\label{eq:expectrefute}
    \mathbb{E}_{\vecinit^1}[f] > e^\epsilon \cdot \mathbb{E}_{\vecinit^2}[f].
\end{equation}
Here, we use $\mathbb{E}_{\vecinit^1}$ and $\mathbb{E}_{\vecinit^2}$ to denote the expectation operators over the output distributions induced by the PP on these two inputs. We show in the proof of Theorem~\ref{thm:ref-sound-complete} that $P$ is not $\epsilon$-DP if and only if there exist a pair of similar inputs $\vecinit^1 \sim_\Phi \vecinit^2$ and a non-negative function $f$ over outputs for which eq.~\eqref{eq:expectrefute} holds. Hence, in order to refute $\epsilon$-DP of the PP $P$, we may without loss of generality search for a pair of similar inputs $\vecinit^1 \sim_\Phi \vecinit^2$ and such a non-negative function $f$ over outputs.

\begin{example}
    Consider again the randomized response mechanism in Fig. \ref{fig:RR-1}. Let $f(\out) = \out^2$ be a non-negative function over its output variable. Based on the probabilities in Table \ref{tab:dist-RR-1}, one can observe that for two similar inputs $(x^1_\mathit{in},\out^1_\mathit{in}) = (1,0)$ and $(x^2_\mathit{in},\out^2_\mathit{in}) = (0,0)$, we have
    \[
         \mathbb{E}_{(1,0)}[f] = 0.75, \,\,\,\,\,\,\,\,\,\,\,\,\,\,  \mathbb{E}_{(0,0)}[f] = 0.25.
    \]
    Hence, we have $\mathbb{E}_{(1,0)}[f] = e^{\ln 3} \cdot \mathbb{E}_{(0,0)}[f]$, and
    for any $\epsilon < \ln 3$ it holds that $\mathbb{E}_{(1,0)}[f] > e^\epsilon \cdot \mathbb{E}_{(0,0)}[f]$. This refutes $\epsilon$-DP of the mechanism for every $\epsilon < \ln 3$. 

    Next, consider the histogram mechanism in Fig. \ref{fig:histogram}. Already in the case when $n=1$ (the simplified PP for $n=1$ is showed at the bottom of Fig. \ref{fig:histogram}), if we consider two similar inputs $(q_{0\,\mathit{in}}^1,\eta_\mathit{in}^1) = (1.5,0)$ and $(q_{0\, \mathit{in}}^2,\eta_\mathit{in}^2) = (2.5,0)$ and a non-negative function $f(q_0) = 2q_0^2 + 393.92 q_0^4$ over the output variable~$q_0$, we observe that
    $\mathbb{E}_{(1.5,0)}[f] = 22092.64 \leq e^{\ln 2.46} \cdot 54402.08 = e^{\ln 2.46} 
    \cdot \mathbb{E}_{(2.5,0)}[f]$, which implies that the mechanism is not $\epsilon$-DP for any $\epsilon < \ln 2.46 \approx 0.9001$. 

\end{example}

\parag{Sound and complete proof rule for DP refutation: Expectation super/sub-martingale witnesses.} A challenge in computing such a function $f$ is the exact computation of the expected value of $f$ on PP output, since this expected value may not admit a closed-form expression in the general case (e.g.~for PPs with unbounded loops~\cite{KaminskiKMO18}). To overcome this challenge, our method instead computes a non-negative function $f$ over PP outputs {\em simultaneously and together with} a lower bound $L \leq \mathbb{E}_{\vecinit^1}[f]$ and an upper bound $U \geq \mathbb{E}_{\vecinit^2}[f]$ on the two expected values, while in addition requiring that $L > e^{\epsilon} \cdot U$. These three inequalities together entail the inequality in eq.~\eqref{eq:expectrefute} and hence allow our method to refute $\epsilon$-DP.

Unlike the exact expected value of the function $f$ on PP output which is in general hard to compute, upper and lower bounds on the expected value may be efficiently and automatically computed by using {\em upper expectation supermartingales (UESMs)} for upper bounds and {\em lower expectation submartingales (LESMs)} for lower bounds~\cite{ChatterjeeGNZ24,ChatterjeeGNZ25}. A UESM (resp.~LESM) for a function $f$ over PP outputs is a function that assign a real value to each PP state, which is required to satisfy a set of conditions that together ensure that the UESM (resp.~LESM) evaluates to an upper (resp.~lower) bound on the expected value of $f$ on PP output. It was shown in~\cite{ChatterjeeGNZ25} that U/LESMs provide a sound and complete proof rule for deriving upper and lower bounds on the expected value of a function on PP output. Their name is due to their connection to super- and submartignale processes from probability theory, which are used to formally establish soundness and completeness results~\cite{Williams91}. Our method leverages U/LESMs and we propose a new proof rule for refuting $\epsilon$-DP in a PP which we call expectation super/sub-martingale witnesses. Informally, given a PP, a similarity relation $\Phi$ over its inputs and a privacy budget $\epsilon \geq 0$, an {\em expectation super/sub-martingale witness for $\epsilon$-DP refutation} in the PP is a tuple $(\vecinit^1,\vecinit^2,f,L_f,U_f)$ where:
\begin{compactenum}
    \item[(R1)] $\vecinit^1, \vecinit^2$ is a pair of similar inputs, i.e.~$\vecinit^1 \similar \vecinit^2$;
    \item[(R2)] $f\colon\R^{\pvars_\out} \to \R$ is a non-negative function over PP outputs;
    \item[(R3)] $L_f$ is an LESM for $f$ that satisfies the necessary LESM conditions and yields a lower expected value bound $L \leq \mathbb{E}_{\vecinit^1}[f]$;
    \item[(R4)] $U_f$ is a UESM for $f$ that satisfies the necessary UESM conditions and yields an upper expected value bound $U \geq \mathbb{E}_{\vecinit^2}[f]$
    \item[(R5)] The inequality $L > e^\epsilon \cdot U$ holds.
\end{compactenum}
We formalize the notions of U/LESMs, the conditions that they need to satisfy, and our novel proof rule for $\epsilon$-DP refutation in Section~\ref{sec:proofrule}, and also illustrate it on our two running examples in Fig.~\ref{fig:RR-1} and Fig.~\ref{fig:histogram}. Moreover, in Theorem~\ref{thm:ref-sound-complete}, we prove that expectation super/sub-martingale witness for $\epsilon$-DP refutation provide a {\em sound and complete} proof rule for refuting $\epsilon$-DP in PPs. This is the main theoretical result of our work.

\parag{Automation: Template-based synthesis.} We also present an algorithm for automated $\epsilon$-DP refutation based on our novel sound and complete proof rule. Our algorithm considers arithmetic PPs in which all arithmetic expressions are assumed to be polynomials (see the full list of algorithm assumptions in Section~\ref{sec:algo}) and it computes an instance of an expectation super/sub-martingale witness for $\epsilon$-DP refutation that can be expressed in polynomial real arithmetic.

The key challenge in achieving this is the effective computation of all elements of the witness tuple $(\vecinit^1,\vecinit^2,f,L_f,U_f)$. Note that the elements of the tuple cannot be computed separately, and the computation must be guided by the requirement that the LESM $L_f$ and the UESM $U_f$ give rise to lower and upper bounds $L$ and $U$ on the expected value of function $f$ on output that satisfy $L > e^{\epsilon} \cdot U$. Hence, we propose an algorithm that computes these five objects simultaneously while also imposing this inequality on the computation process, by following a template-based synthesis approach. Our algorithm fixes a symbolic template for each of the five elements of the witness tuple $(\vecinit^1,\vecinit^2,f,L_f,U_f)$, where templates for $f$, $L_f$ and $U_f$ are defined in terms of polynomial expressions over real-valued symbolic template variables. It then collects a system of constraints over the symbolic template variables which together entail that $(\vecinit^1,\vecinit^2,f,L_f,U_f)$ is a valid expectation super/sub-martingale witness for $\epsilon$-DP refutation. The collected system of constraints is processed by applying results from real algebraic geometry, in order to simplify its solving. Finally, the algorithm uses an off-the-shelf SMT solver to solve the system of constraints, and any solution gives rise to a valid proof of $\epsilon$-DP refutation. We present the details of our template-based synthesis algorithm in Section~\ref{sec:algo}. Moreover, in Theorem~\ref{thm:algo-sound-complete}, we show that the algorithm is sound, conditionally semi-complete (a notion that we formalize in Section~\ref{sec:algo}), and admits a PSPACE upper bound on its complexity.

\parag{Novelty and limitations.} As discussed in Section~\ref{sec:intro}, our theoretical results give rise to a sound and complete proof rule for $\epsilon$-DP refutation that allows fully automated reasoning. Our experiments in Section~\ref{sec:experiments} also demonstrate the ability of our automated method to refute $\epsilon$-DP for a large number of challenging stochastic mechanisms and superior practical performance compared to the state of the art. This is the main novelty of our work. However, our method still suffers from a few limitations which we discuss below, addressing which is an exciting venue for future work:
\begin{compactenum}
    \item {\em Theory: Restriction to $\epsilon$-DP refutation.} Our method considers an established notion of $\epsilon$-DP and the problem of its refutation. However, another popular notion of DP is $(\epsilon,\delta)$-DP~\cite{DworkR14}. Given a PP, a similarity relation $\Phi$ over its inputs and $\epsilon,\delta \geq 0$, a PP is said to be $(\epsilon,\delta)$-DP if for every pair of similar inputs $\vecinit^1 \sim_\Phi \vecinit^2$ and for every output event $A$ we have $\mathbb{P}_{\vecinit^1}[\Output(A)] \leq e^\epsilon \cdot \mathbb{P}_{\vecinit^2}[\Output(A)] + \delta$. Hence, $(\epsilon,\delta)$-DP is a more general notion of differential privacy with $\epsilon$-DP being the special case when $\delta = 0$.

    Our method for $\epsilon$-DP refutation {\em does not} readily extend to $(\epsilon,\delta)$-DP refutation. This is because the immediate extension of DP refutation via expectation mismatch reasoning as in eq.~\eqref{eq:expectrefute} becomes unsound in the case of $(\epsilon,\delta)$-DP refutation. Studying how to extend this style of reasoning to $(\epsilon,\delta)$-DP refutation and other notions of privacy is an interesting direction of future work. We also note that this restriction is not limited to our method, and most existing methods for formal and automated DP refutation are also restricted to $\epsilon$-DP refutation while not being applicable to $(\epsilon,\delta)$-DP refutation~\cite{WangDKZ20,BichselSBV21,BichselGDTV18,GuanFHY23}.
    \item {\em Automation: Restriction to polynomial arithmetic PPs.} Our novel proof rule and its soundness and completeness guarantees apply to general PPs in which arithmetic expressions may be arbitrary Borel-measurable functions. However, our template-based synthesis algorithm for automated computation of expectation super/sub-martingale witnesses is restricted to polynomial arithmetic PPs. Considering algorithms for DP refutation in non-polynomial PPs would be an interesting direction of future work. Note, however, that our class of polynomial arithmetic PPs still allows sampling instructions from non-polynomial distributions such as Laplacian or normal, and we only require that arithmetic expressions appearing in program variable assignments are polynomial expressions.
\end{compactenum}


\section{Preliminaries}\label{sec:prelims}

We use boldface notation to denote vectors, e.g.~$\mathbf{x}$ or $\mathbf{y}$. For a real-valued vector $\mathbf{x} \in \R^n$ and an index $1 \leq i \leq n$, we use $\mathbf{x}[i]$ to denote the value of the $i$-th component of $\mathbf{x}$. In what follows, we assume familiarity with basic notions of probability theory, such as {\em probability space}, {\em random variable}, or {\em expected value}. We use $\distrs(A)$ to define the set of all probability distributions over the set $A$. We refer the reader to~\cite{Williams91} for formal definitions of these notions.

\subsection{Probabilistic Programs}\label{sec:probprogs}

We formalize the problem of DP refutation in the setting of {\em probabilistic programs (PPs)}. We consider imperative arithmetic PPs that allow standard programming constructs, such as program variable assignments, if-branching, sequential composition and loops. In addition, we allow constructs for {\em sampling instructions} which sample a value from a probability distribution and assign it to some program variable. These are denoted by $\textbf{sample}(\dots)$ commands in our syntax. We allow sampling from both discrete and continuous probability distributions. Note that probabilistic branching instructions $\textbf{if prob}(\dots)$ can be obtained as syntactic sugar, by using a sampling instruction from a Bernoulli distribution followed by conditional branching. Example PPs that will serve as our running examples for DP analysis are shown in Fig.~\ref{fig:RR-1} and Fig.~\ref{fig:histogram}.

We assume that all program variables in our PPs are real-valued. Moreover, for the semantics of our PPs to be mathematically well defined, we assume that every arithmetic expression appearing in our PPs defines a {\em Borel-measurable} function over program variables. Borel-measurability is a standard assumption in the PP analysis literature, and allows most standard arithmetic operations and functions from mathematical analysis~\cite{Williams91}.

\parag{Probabilistic control flow graphs.} We use {\em probabilistic control-flow graphs (pCFGs)}~\cite{AgrawalC018,ChatterjeeFNH18,ChatterjeeGNZ24} to formally model our PPs.
A pCFG is a tuple $\pCFG=(\locs,\pvars,\Vout,\locinit,\thetainit,\transitions,\guards,\updates)$, where:
\begin{compactitem}
	\item $\locs$ is a finite set of {\em locations};
	\item $\pvars=\{x_1,\dots,x_{|\pvars|}\}$ is a finite set of {\em program variables};
	\item $\Vout = \{x_1,\dots,x_{|\Vout|}\} \subseteq \pvars$ is a finite set of {\em output variables};
	\item $\locinit \in \locs$ is the {\em initial program location};
    \item $\thetainit \subseteq \R^{|\pvars|}$ is the set of {\em initial variable valuations};
	\item $\transitions\,\subseteq \locs \times \distrs(\locs)$ is a finite set of {\em transitions}. For each transition $\tau=(\loc,\prob)$, $\loc$ is its {\em source location} and $\prob:\locs \rightarrow [0,1]$ is a probability distribution over {\em successor locations};
	\item $\guards$ is a map assigning to each transition $\tau=(\loc,\prob)\in\,\transitions$ a predicate $\guards(\tau)$ over $\pvars$ specifying if $\tau$ can be executed, which we call a {\em guard};
	\item $\updates$ is a map assigning to each transition $\tau=(\loc,\prob)\in\,\transitions$ an {\em update} $\updates(\tau)=(j,u)$, where $j\in\{1,\dots,|\pvars|\}$ is a {\em variable index} and $u$ is an {\em update element} which can be either
	\begin{compactitem}
		\item the bottom element $u=\bot$, denoting no variable update, or
		\item a Borel-measurable arithmetic expression $u:\R^{|\pvars|}\rightarrow\R$, or
		\item a probability distribution $u = \delta$, denoting that variable value is sampled according to $\delta$.
	\end{compactitem}
\end{compactitem}
The translation of PPs into pCFGs is standard, hence we omit the details and refer the reader to e.g.~\cite{AgrawalC018}. We make the following assumptions about our pCFGs:
\begin{compactenum}[(i)]
    \item A pCFG $\pCFG$ contains a special {\em terminal location} $\locterm$ which only has one outgoing self-loop transition $\tau$ with $\guards(\tau) \equiv \text{true}$ and the bottom update element. 
    \item Each location $\loc$ has at least one outgoing transition and $\bigvee_{\tau=(l,\_)}\guards(\tau)\equiv\mathit{true}$.
    \item The guards of any two distinct transitions $\tau_1$ and $\tau_2$ outgoing from the same location $\loc$ are {\em mutually exclusive}, i.e.~$\guards(\tau_1)\land\guards(\tau_2)\equiv\mathit{false}$
\end{compactenum}
The first two assumptions are to avoid deadlock situations where the program cannot progress, and are imposed without loss of generality as they can be enforced by introducing additional transitions. The last assumption prevents pCFGs from admitting non-deterministic behavior.

\parag{States, paths and runs.} A {\em state} in $\pCFG$ is a tuple $(\loc,\mathbf{x})$, where $\loc$ is a location and $\mathbf{x}\in\R^{|\pvars|}$ is a variable valuation. A transition $\tau=(\loc,\prob)$ is {\em enabled} at a state $(\loc,\mathbf{x})$, if $\mathbf{x}\models\guards(\tau)$. A state $(\loc',\mathbf{x}')$ is a {\em successor} of a state $(\loc,\mathbf{x})$, if there is an enabled transition $\tau=(\loc,\prob)$ at $(\loc,\mathbf{x})$ such that $\prob(\loc') > 0$ and $\mathbf{x}'$ can be obtained by applying the update of $\tau$ to $\mathbf{x}$. A state $(\loc,\mathbf{x})$ is said to be an {\em initial state}, if $\loc = \locinit$ and $\mathbf{x} \in \thetainit$. A state $(\loc,\mathbf{x})$ is said to be a {\em terminal state}, if $\loc = \locterm$.

A {\em finite path} in $\pCFG$ is a sequence $(\loc_0,\mathbf{x}_0),(\loc_1,\mathbf{x}_1),\dots,(\loc_k,\mathbf{x}_k)$ of states with $(\loc_0,\mathbf{x}_0)$ an initial state and each state $(\loc_{i+1},\mathbf{x}_{i+1})$ being a successor of $(\loc_i,\mathbf{x}_i)$. A {\em run} (or an {\em execution}) in $\pCFG$ is an infinite sequence of states whose each finite prefix is a finite path. We use $\Run^\pCFG$ to denote the set of all runs in $\pCFG$. A state $(\loc,\mathbf{x})$ is {\em reachable} in $\pCFG$ if there exists a finite path in $\pCFG$ whose last state is $(\loc,\mathbf{x})$. 

\parag{Semantics.} The pCFG semantics are formalized as (possibly infinite-state) discrete-time Markov chains~\cite{AgrawalC018}. In particular, a pCFG $\pCFG$ together with an initial variable valuation $\vecinit \in \thetainit$ define a discrete-time Markov chain over the set of states in $\pCFG$, whose trajectories correspond to runs in~$\pCFG$. Each trajectory of the Markov chain starts in the initial state $(\locinit,\vecinit)$. Then, at each time step, if the trajectory is in state $(\loc_i,\mathbf{x}_i)$, the next state $(\loc_{i+1},\mathbf{x}_{i+1})$ is defined according to the probability distribution over successor states and the update of the unique pCFG transition $\tau_i$ enabled at $(\loc_i,\mathbf{x}_i)$. This Markov chain gives rise to a probability space $(\Run^\pCFG, \mathcal{F}^\pCFG, \probm^\pCFG_{\vecinit})$ over the set of all runs in $\pCFG$ that start in the initial state $(\locinit,\vecinit)$. The probability space is formally defined via the cylinder construction~\cite{MeynTweedie}. We use $\mathbb{E}^\pCFG_{\vecinit}$ to denote the expectation operator in this probability space.

\parag{Termination.} We say that a run $\rho = (\loc_0,\mathbf{x}_0),(\loc_1,\mathbf{x}_1),\dots$ in the pCFG $\pCFG$ is {\em terminating}, if it reaches some terminal state. We use $\Termset \subseteq \Run^\pCFG$ to denote the set of all terminating runs in $\pCFG$ and use $\timeterm$ as the random variable for the number of steps $\pCFG$ takes before termination. A pCFG $\pCFG$ is said to be {\em almost-surely (a.s.) terminating}, if the probability of a random run terminating is equal to $1$, i.e.~if $\probm_{\vecinit}[\Termset] = 1$ for every initial variable valuation $\vecinit \in \thetainit$. Our algorithm for refuting DP assumes that the underlying pCFG terminates almost-surely. 

\subsection{Differential Privacy}\label{sec:diffprivacy}

We now formally define the notion of differential privacy. Consider a PP with some specified set of inputs and a specified set of output variables. Intuitively, a PP is differentially private if, for any two "similar" input variable valuations $\vecinit^1$ and $\vecinit^2$, the probability of every output event on these two inputs differs by an insignificant amount. For example, suppose that we define the notion of similarity of inputs by the presence and absence of exactly one datapoint in the input. Then, a PP being differentially private ensures that an adversarial observer cannot easily detect information about individual datapoints by observing outputs on similar inputs.

In order to formalize the above intuition, we need to formalize the notions of input similarity and of output distributions. Consider an a.s.~terminating pCFG $\pCFG=(\locs,\pvars,\Vout,\locinit,\thetainit,\transitions,\guards,\updates)$. We say that the set of initial variable valuations $\thetainit \subseteq \R^{|\pvars|}$ is the set of its {\em inputs}, and that the set of all output variable valuations $\R^{\Vout}$ is the set of its {\em outputs}. 

\parag{Similarity of inputs.} To formally reason about the similarity of inputs, we consider a boolean predicate $\Phi(\vecinit^1,\vecinit^2)$ over $\R^\pvars \times \R^\pvars$ which we call a {\em similarity relation}. We say that two inputs $\mathbf{\vecinit^1},\mathbf{\vecinit^2} \in \thetainit$ are {\em similar} if $\Phi(\vecinit^1,\vecinit^2)$ is true, and write $\vecinit^1 \similar \vecinit^2$.

\parag{Output distributions.} For each input $\vecinit \in \thetainit$, an a.s.~terminating pCFG defines a probability distribution over the set of its outputs as follows. Denote by $\mathbf{x}^{\out}$ the projection of a variable valuation $\mathbf{x} \in \R^{|\pvars|}$ to the components of output variables in $\Vout$. For a Borel-measurable set $A \subseteq \R^{|\Vout|}$, define the set of terminating pCFG runs that reach a variable valuation in $A$ upon termination via $\Output(A) = \{ \rho \in \Run^\pCFG \mid \rho \textit{ reaches a terminal state } (\locterm,\mathbf{x}) \text{ with } \mathbf{x}^{\out} \in A\}$.

The {\em output distribution} of $\pCFG$ given an initial variable valuation $\vecinit \in \thetainit$ is a probability distribution over outputs $\R^{|\Vout|}$ that is defined via
$\mu^{\pCFG}_{\vecinit}(A) = \probm^\pCFG_{\vecinit}[\Output(A)],$
for every Borel-measurable set $A \subseteq \R^{|\Vout|}$. We omit the superscript $\pCFG$ whenever it is clear from the context.


\begin{definition}[Differential privacy~\cite{Dwork06,DworkR14}]\label{def:diffprivacy}
    Consider a pCFG $\pCFG=(\locs,\pvars,\Vout,\locinit,\thetainit,\transitions,\guards,\updates)$ which is assumed to be a.s.~terminating, a similarity relation $\Phi$ over its inputs and $\epsilon \geq 0$. We say that the pCFG $\pCFG$ is {\em $\epsilon$-differentially private ($\epsilon$-DP)}, if for every pair of inputs $\vecinit^1,\vecinit^2 \in \thetainit$ such that $\vecinit^1 \similar \vecinit^2$ and for every Borel-measurable set of outputs $A \subseteq \R^{|\Vout|}$ we have
    \begin{equation}\label{eq:diffprivacy}
        \mathbb{P}_{\vecinit^1}\Big[\Output(A)\Big] \leq e^\eps \cdot \mathbb{P}_{\vecinit^2}\Big[\Output(A)\Big],
    \end{equation}
    or, equivalently, $\mu_{\vecinit^1}(A) \leq e^\eps \cdot \mu_{\vecinit^2}(A)$.
\end{definition}

\parag{Problem statement.} We now define the DP refutation problem that we consider in this work. Suppose that $\pCFG=(\locs,\pvars,\Vout,\locinit,\thetainit,\transitions,\guards,\updates)$ is a pCFG, $\Phi$ is a similarity relation over its inputs and $\epsilon\geq 0$. Prove that $\pCFG$ is not $\epsilon$-DP, and compute a pair of inputs $\vecinit^1$ and $\vecinit^2$ such that $\vecinit^1 \similar \vecinit^2$ but for which eq.~\eqref{eq:diffprivacy} is violated.

\begin{remark}
    Note that we use the set-based definition of DP which relates probabilities of events instead of single elements. This is because we are considering PPs with real-valued variables and continuous distributions, which may make the probability of observing a single output element zero. 
\end{remark}
\section{Sound and Complete Proof Rule for Differential Privacy Refutation}\label{sec:proofrule}

We now present our novel proof rule for DP refutation. As outlined in Section~\ref{sec:overview}, our proof rule is based on the notions of upper expectation supermartingales and lower expectation submartingales, which respectively provide proof rules for computing upper and lower bounds on the expected value of a function upon PP termination. To that end, in Section~\ref{sec:ESM} we first recall these notions. We then present our proof rule for DP refutation and establish its soundness and completeness in Section~\ref{sec:proof-rule-ref}. Let $\pCFG=(\locs,\pvars,\Vout,\locinit,\thetainit,\transitions,\guards,\updates)$ be a pCFG. We fix some additional terminology:
\begin{enumerate}
    \item A {\em state function} $\eta$ is a function which to each pCFG location $\loc \in \locs$ assigns a Borel-measurable function $\eta(\loc): \mathbb{R}^{\pvars} \rightarrow \mathbb{R}$ over program variables. We use $\eta(\loc,\mathbf{x})$ for $\eta(\loc)(\mathbf{x})$ as well.

    \item A {\em predicate function} $\Pi$ is a function which to each pCFG location $\loc \in \locs$ assigns a predicate $\Pi(\loc)$ over program variables. The predicate function $\Pi$ naturally induces a set of states $\{(\loc,\mathbf{x}) \mid \mathbf{x} \models \Pi(\loc)\}$ and we also use $\Pi$ to denote this set.
\end{enumerate}

\subsection{Background: Expectation Supermartingales and Submartingales}\label{sec:ESM}

Recall from Section~\ref{sec:overview} that, given a function $f$ over pCFG outputs, upper expectation supermartingales (UESMs) and lower expectation submartingales (LESMs) for $f$ are functions which assign a real value to each pCFG state and which are required to satisfy a set of conditions at reachable pCFG states. These conditions together ensure that U/LESMs give rise to an upper/lower bound on the expected value of $f$ on pCFG output. 

The following definitions formalize the notions of U/LESMs. However, since it is in general not feasible to compute the set of all reachable pCFG states, we impose the defining conditions of U/LESMs over a supporting invariant rather than over reachable pCFG states. Given a pCFG $\pCFG$, an {\em invariant} in $\pCFG$ is a predicate function $I$ which contains all reachable states in the pCFG, i.e.~for every reachable state $(\loc,\mathbf{x})$ we have $\mathbf{x} \models I(\loc)$. This is done with later automation in mind, since invariants can be synthesized by using off-the-shelf tools as we discuss in Section~\ref{sec:algo}.

\begin{definition}[Upper expectation supermartingale~\cite{ChatterjeeGNZ24}]\label{def:uesm}
Let $\pCFG=(\locs,\pvars,\Vout,\locinit,\thetainit,\transitions,\guards,\updates)$ be an a.s.~terminating pCFG, $I$ be an invariant in $\pCFG$, and $f\colon \R^{\pvars_\out} \to \R$ be a Borel-measurable function over the outputs in $\pCFG$. An {\em upper expectation supermartingale (UESM)} for $f$ is a state function $U_f$ such that the following conditions are satisfied: 
\begin{compactitem}
    \item \textit{Zero on output.} For every $\val \models I(\locterm)$, it holds that $U_f(\locterm,\val)=0$.
    \item \textit{Expected $f$-decrease.} For every location $\loc$, transition $\tau = (\loc,\prob) \in\, \mapsto$ and variable valuation $\val \models I(\loc) \wedge \guards(\tau)$, if $\val'$ is the valuation obtained by performing $\tau$, it holds that 
    \[
        U_f(\loc,\val) + f(\val^\out) \geq \sum_{\loc' \in \locs} \prob(\loc') \cdot \E[U_f(\loc', \val') + f(\val'^\out)].
    \]
\end{compactitem}
\end{definition}

\begin{definition}[Lower expectation submartingale~\cite{ChatterjeeGNZ24}]\label{def:lesm}
Let $\pCFG=(\locs,\pvars,\Vout,\locinit,\thetainit,\transitions,\guards,\updates)$ be an a.s.~terminating pCFG, $I$ be an invariant in $\pCFG$, and $f\colon \R^{\pvars_\out} \to \R$ be a Borel-measurable function over the outputs in $\pCFG$. A {\em lower expectation submartingale (LESM)} for $f$ is a state function $L_f$ such that the following conditions are satisfied: 
\begin{compactitem}
    \item \textit{Zero on output.} For every $\val \models I(\locterm)$, it holds that $L_f(\locterm,\val)=0$.
    \item \textit{Expected $f$-increase.} For every location $\loc$, transition $\tau = (\loc,\prob) \in\, \mapsto$ and variable valuation $\val \models I(\loc) \wedge \guards(\tau)$, if $\val'$ is the valuation obtained by performing $\tau$, it holds that 
    \[
        L_f(\loc,\val) + f(\val^\out) \leq \sum_{\loc' \in \locs} \prob(\loc') \cdot \E[L_f(\loc', \val') + f(\val'^\out)].
    \]
\end{compactitem}
\end{definition}

Intuitively, if $U_f$ (resp. $L_f$) is a UESM (resp. LESM) for a function $f$ over the outputs in $\pCFG$, the value of $U_f(\loc,\val)+f(\val^\out)$ (resp. $L_f(\loc,\val)+f(\val^\out)$) is zero on output and for every $\val \models I(\loc)$ it decreases (resp.~increases) in expectation upon every one-step execution of the pCFG.

Unfortunately, the Zero on output and the Expected $f$-decrease/increase conditions alone are not sufficient for U/LESMs to provide sound proof rules for establishing upper or lower bounds on the expected value of a function on program output. However, it was shown in~\cite{ChatterjeeGNZ24} that soundness can be ensured by additionally imposing one of the four {\em OST-soundness} conditions on the program and the U/LESM pair. The name of these conditions is derived from the Optional Stopping Theorem (OST) in martingale theory, which lies at the core of the soundness proof. The first three conditions are derived from the classical OST~\cite{Williams91}, whereas the fourth condition is derived from the extended OST~\cite{Wang0GCQS19}. In what follows, we first provide an intuition behind the four OST-soundness conditions, after which we formally define them in Definition~\ref{def:ost}:
%
\begin{itemize}
    \item[(C1)] \textbf{Bounded Termination Time:} The program terminates within a fixed constant number of steps, effectively capping the execution length.
    \item[(C2)] \textbf{Bounded Values:} The sum of the state functions that define U/LESM and the target function $f$ is bounded by a fixed constant throughout all possible program runs.
    \item[(C3)] \textbf{Bounded Expected Changes:} The program terminates in finite expected time, and the expected step-wise change of the sum of the state functions that define U/LESM and the target function $f$ is bounded by a fixed constant throughout all possible program runs.
    \item[(C4)] \textbf{Probabilistic Termination with Polynomial Step Bounds:} The probability of the program running for a long time decreases exponentially, and the worst-case step-wise change of the sum of the state functions that define U/LESM and the target function $f$ grows at most polynomially in the number of execution steps.
\end{itemize}
\begin{definition}[OST-soundness~\cite{ChatterjeeGNZ24}]\label{def:ost}
    Let $\pCFG$ be a pCFG, $\eta$ be a state function in $\pCFG$ and $f\colon \R^\Vout \to \R$ be Borel-measurable. Moreover, let $\loc_i$ be the $i$-th location and $\val_i$ variable valuation along a random run $\rho$ in $\pCFG$ and $Y_i(\rho) = \eta(\loc_i,\val_i) + f(\val_i^\out)$. The triple $(\pCFG, \eta,f)$ is said to be {\em OST-sound} if  $\E[|Y_i|] < \infty$ holds for every $i \geq 0$, and at least one of the following conditions holds for every $\vecinit \in \thetainit$: 
    \begin{itemize}
        \item[(C1)] $\mathbb{P}_{\vecinit}[\timeterm < c] =1$ for a constant $c$, i.e. $\pCFG$ has uniformly bounded termination time. 
        \item[(C2)] There is a constant $c$ such that, for each time step $i \geq 0$ and every run $\rho$ in $\pCFG$, it holds that $|Y_i(\rho)| < c$, i.e. the sum of $\eta$ and $f$ is uniformly bounded over program runtime.
        \item[(C3)] $\E_{\vecinit}[\timeterm] <\infty$ and there exists a constant $c$ such that for each time step $i \geq 0$ and every run $\rho$ in $\pCFG$, it holds that $\E[|Y_{i+1} - Y_i| \mid \mathcal{F}_i](\rho) < c$, i.e. the expected one-step change of $Y_i$ when conditioned on the past steps is uniformly bounded over program runtime. 
        \item[(C4)] There exist real values $M,c_1,c_2,d$, such that (i) for all $n \in \N$ that are sufficiently large, $\mathbb{P}_{\vecinit}[\timeterm >n] \leq c_1 \cdot e^{-c_2 \cdot n}$, and (ii) for each time step $i \geq 0$ and every run $\rho$ in $\pCFG$ it holds that $|Y_{i+1}(\rho)-Y_i(\rho)| \leq M \cdot n^d$.
    \end{itemize}

\end{definition}

The following theorem shows that U/LESMs for a function $f$ provide sound proof rules for computing upper/lower bounds on the expected value of the function $f$ on pCFG output.

\begin{theorem}[Soundness of U/LESMs \cite{ChatterjeeGNZ24}] \label{thm:esm-soundness}
    Let $\pCFG = (\locs,\pvars,\Vout,\locinit,\thetainit,\transitions,\guards,\updates)$ be an a.s.~terminating pCFG, $I$ be an invariant in $\pCFG$, $f\colon \R^{\pvars_\out} \to \R$ be a Borel-measurable function over the outputs in $\pCFG$, $U_f$ be an UESM for $f$ and $L_f$ be an LESM for $f$. If the triples $(\pCFG,U_f,f)$ and $(\pCFG,L_f,f)$ are OST-sound, then
    \[
    \begin{split}
    U_f(\locinit,\vecinit) + f(\vecinit^\out) &\geq \underset{\val \sim \mu_{\vecinit}}{\E}{[f(\val^\out)]} \\
    L_f(\locinit,\vecinit) + f(\vecinit^\out) &\leq \underset{\val \sim \mu_{\vecinit}}{\E}{[f(\val^\out)]} \\
    \end{split}
    \]
\end{theorem}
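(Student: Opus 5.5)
The plan is to prove both inequalities by applying the Optional Stopping Theorem to the stochastic process $Y_i = U_f(\loc_i,\val_i) + f(\val_i^\out)$ (resp.\ with $L_f$) along a random run from $(\locinit,\vecinit)$, stopped at $T=\timeterm$. Throughout, $\mathcal{F}_i$ is the natural filtration generated by the first $i$ states of the run. We treat the upper bound in detail; the lower bound is symmetric, with inequalities reversed.

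\emph{Step 1: $Y_i$ is a supermartingale.} Every state visited along a run from an initial state is reachable, so it satisfies the invariant $I$. By assumptions (ii) and (iii) on pCFGs, exactly one transition $\tau$ is enabled at $(\loc_i,\val_i)$. The Markov semantics therefore give
\[
\E[Y_{i+1}\mid\mathcal{F}_i] = \sum_{\loc'}\prob(\loc')\,\E[U_f(\loc',\val')+f(\val'^\out)].
\]
By the Expected $f$-decrease condition, this is at most $Y_i$. At terminal states the only transition is the self-loop with the bottom update, so $Y$ stays constant there. Integrability $\E[|Y_i|]<\infty$ is part of OST-soundness. Hence $(Y_i)$ is a supermartingale.

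\emph{Step 2: apply OST.} Because $\pCFG$ is a.s.\ terminating, $T<\infty$ almost surely. Each of (C1)--(C4) is exactly a hypothesis of a known OST variant for supermartingales:
\begin{itemize}
\item (C1) is the bounded stopping time case;
\item (C2) is the uniformly bounded process case;
\item (C3) is the case $\E[T]<\infty$ with bounded conditional increments;
\item (C4) is the extended OST of~\cite{Wang0GCQS19}, with exponential tail of $T$ and polynomially bounded differences.
\end{itemize}
In each case we obtain $\E[Y_T]\le \E[Y_0] = U_f(\locinit,\vecinit)+f(\vecinit^\out)$.

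\emph{Step 3: identify $Y_T$.} At time $T$ the run is at $\locterm$ with valuation $\val_T\models I(\locterm)$. By Zero on output, $Y_T = f(\val_T^\out)$. By the definition of the output distribution $\mu_{\vecinit}$, $\E[f(\val_T^\out)] = \E_{\val\sim\mu_{\vecinit}}[f(\val^\out)]$. This gives the claimed upper bound. For $L_f$, $Y$ is a submartingale and OST yields $\E[Y_T]\ge \E[Y_0]$, which is the lower bound.

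The main obstacle is Step 2, in particular case (C4). There the increment bound is stated in terms of $n$, so it must be read as polynomial in the time index. One then has to check that the exponential tail of $T$ dominates the polynomial growth of $|Y_{T\wedge n}-Y_0|$. This makes $Y_{T\wedge n}$ uniformly integrable, so dominated convergence gives $\lim_{n\to\infty}\E[Y_{T\wedge n}] = \E[Y_T]$. I would simply invoke the extended OST from~\cite{Wang0GCQS19} here rather than re-derive it. A minor further subtlety is the measurability of $Y_i$ and of the update expectation, which follows from Borel-measurability of $U_f$, $f$ and the updates.
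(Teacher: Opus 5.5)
Your proposal is correct and is essentially the argument the paper relies on. The paper does not reprove this theorem but imports it from \cite{ChatterjeeGNZ24}, noting that soundness rests on the classical OST for (C1)--(C3) and on the extended OST of \cite{Wang0GCQS19} for (C4); this is exactly your route of showing that $Y_i$ is a super/submartingale along runs, stopping at $\timeterm$, and using Zero on output to identify $Y_T$ with $f$ evaluated on the output. Your reading of the (C4) increment bound as polynomial in the time index is the intended one, and under it $|Y_{T\wedge n}-Y_0|\le M\,T^{d+1}$ is integrable by the exponential tail.
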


The works~\cite{ChatterjeeGNZ24,ChatterjeeGNZ25} show that U/LESMs are not only sound, but also complete proof rule for deriving tight upper and lower bounds on the expected value of a function $f$ on output. Moreover, the computation of U/LEMSs and checking of OST-soundness conditions can be fully automated, which we will use later in Section~\ref{sec:algo} when designing our automated algorithm for $\epsilon$-DP refutation.


\begin{example}[U/LESM for RR-1] \label{ex:esm-RR}
    Consider the randomized response mechanism in Fig.~\ref{fig:RR-1}. The following table shows an example of a UESM $U_f$ and an LESM $L_f$ for $f(\out) = \out^2$:
    \begin{center}
    \begin{tabular}{c|c|c}
        Label & UESM & LESM \\
        \hline 
        $l_1$ & $0.25- \out^2 + 0.5 x^2 $ & $0.25 - \out^2  + 0.5 x^2$ \\
        $l_2$ & $- \out^2 + x^2$ & $-\out^2 + x^2$ \\
        $l_3$ & $0.5 - \out^2$ & $0.5 - \out^2$ \\
        $l_4$ & $-\out^2$ & $-\out^2$ \\
        $l_5$ & $1 - \out^2$ & $1 - \out^2$ \\
        $l_t$ & $0$ & $0$ \\
    \end{tabular}
    \end{center}
    The mechanism trivially satisfies the OST-soundness condition (C1) in Definition~\ref{def:ost}, hence the U/LESM define OST-sound triples. Moreover, the above U/LESM give rise to tight upper and lower bounds on the expected value of $f$ upon termination.
\end{example}

\subsection{Proof Rule for Differential Privacy Refutation} \label{sec:proof-rule-ref}

We are now ready to present our proof rule for $\epsilon$-DP refutation. Consider a pCFG $\pCFG=(\locs,\pvars,\Vout,\locinit,\thetainit,\transitions,\guards,\updates)$, a similarity relation $\Phi$ over its inputs, and a privacy budget $\epsilon \geq 0$. To refute $\epsilon$-DP of the pCFG, i.e.~to prove that $\pCFG$ is {\em not} $\eps$-DP, we need to find a pair of similar inputs $\vecinit^1 \similar \vecinit^2$ and an event over outputs $A \subseteq \R^{\pvars_\out}$ such that 
$\mathbb{P}_{\vecinit^1}[\Output(A)] > e^\epsilon \cdot \mathbb{P}_{\vecinit^2}[\Output(A)].$
As outlined in Section~\ref{sec:overview}, our refutation proof rule explicitly characterizes two similar inputs $\vecinit^1 \similar \vecinit^2$. However, instead of also explicitly characterizing a set of outputs $A$, it characterizes a witness of the existence of such a set $A$ in the form of a non-negative function over the pCFG outputs and a U/LESM pair.

\begin{definition}[Expectation super/sub-martingale witness for $\epsilon$-DP refutation]\label{def:proofrule}
Given a pCFG $\pCFG=(\locs,\pvars,\Vout,\locinit,\thetainit,\transitions,\guards,\updates)$, a similarity relation $\similar$ on $\thetainit$, a privacy budget $\epsilon \geq 0$, and an invariant $I$ in $\pCFG$, an {\em expectation super/sub-martingale witness for $\epsilon$-DP refutation} in $\pCFG$ is a tuple $(\vecinit^1, \vecinit^2, f, U_f, L_f)$, where:
\begin{compactenum}
    \item[(R1)] $\vecinit^1, \vecinit^2 \in \thetainit$ is a pair of similar inputs, i.e.~$\vecinit^1 \similar \vecinit^2$;
    \item[(R2)] $f\colon\R^{\pvars_\out} \to \R$ is a non-negative Borel-measurable function over pCFG outputs; 
    \item[(R3)] $L_f$ is an LESM for $f$ in $\pCFG$ such that the triple $(\pCFG, L_f, f)$ is OST-sound;
    \item[(R4)] $U_f$ is a UESM for $f$ in $\pCFG$ such that the triple $(\pCFG, U_f, f)$ is OST-sound;
    \item[(R5)]  $L_f(\locinit,\vecinit^1)+f(\vecinit^{1 \, \out}) > e^\epsilon \left( U_f(\locinit,\vecinit^2)+f(\vecinit^{2 \, \out}) \right)$.
\end{compactenum}
\end{definition}

Intuitively, our proof rule consists of two similar input points $\vecinit^1 \similar \vecinit^2$, a non-negative function $f$ over outputs, and a witness to show that the expectation of $f$ upon the termination of $\pCFG$ on input $\vecinit^1$ is more than a factor of $e^\epsilon$ larger than its expectation upon the termination of $\pCFG$ on input $\vecinit^2$. This discrepancy between the output distributions on two inputs is witnessed by an LESM $L_f$ and a UESM $U_f$ for $f$ in $\pCFG$ which satisfy the inequality $L_f(\locinit,\vecinit^1)+f(\vecinit^{1 \, \out}) > e^\epsilon \left( U_f(\locinit,\vecinit^2)+f(\vecinit^{2 \, \out}) \right)$. The following theorem establishes that the expectation super/sub-martingale witnesses indeed provide a sound and complete proof rule for refuting $\epsilon$-DP.


\begin{theorem}[Soundness and completeness, Proof in Appendix~\ref{app:ref-sound-complete:proof}] \label{thm:ref-sound-complete}
Given a pCFG $\pCFG=(\locs,\pvars,\Vout,\locinit,\thetainit,\transitions,\guards,\updates)$, a similarity relation $\similar$ on $\thetainit$, a privacy budget $\epsilon \geq 0$, and an invariant $I$ in $\pCFG$, the pCFG $\pCFG$ is {\em not} $\eps$-DP {\em if and only if} there exists an expectation super/sub-martingale witness for $\epsilon$-DP refutation $(\vecinit^1, \vecinit^2, f, U_f, L_f)$.
\end{theorem}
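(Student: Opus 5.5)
Soundness and completeness are proved separately. For soundness, suppose $(\vecinit^1,\vecinit^2,f,U_f,L_f)$ is a witness. Since the triples $(\pCFG,L_f,f)$ and $(\pCFG,U_f,f)$ are OST-sound by (R3) and (R4), Theorem~\ref{thm:esm-soundness} applies. It gives
\[
\E_{\val\sim\mu_{\vecinit^1}}[f(\val^\out)] \geq L_f(\locinit,\vecinit^1)+f(\vecinit^{1\,\out})
\]
and
\[
U_f(\locinit,\vecinit^2)+f(\vecinit^{2\,\out}) \geq \E_{\val\sim\mu_{\vecinit^2}}[f(\val^\out)].
\]
Chaining these with (R5) yields $\int f\,d\mu_{\vecinit^1} > e^\epsilon \int f\,d\mu_{\vecinit^2}$.

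We then argue by contradiction. Suppose $\pCFG$ were $\epsilon$-DP. Then $\mu_{\vecinit^1}(A)\le e^\epsilon\mu_{\vecinit^2}(A)$ for every Borel $A$, since $\vecinit^1\similar\vecinit^2$ by (R1). This inequality extends to nonnegative simple functions by linearity. Since $f\ge 0$ is Borel by (R2), we approximate it from below by simple functions and apply monotone convergence to get $\int f\,d\mu_{\vecinit^1}\le e^\epsilon\int f\,d\mu_{\vecinit^2}$, a contradiction. The case $\int f\,d\mu_{\vecinit^2}=\infty$ cannot occur, because the left-hand side is strictly larger and hence would also be infinite. We should also remark that the strict inequality forces the right-hand expectation to be finite.

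For completeness, suppose $\pCFG$ is not $\epsilon$-DP. Then there are $\vecinit^1\similar\vecinit^2$ and a Borel set $A$ with $\mu_{\vecinit^1}(A)>e^\epsilon\mu_{\vecinit^2}(A)$. We take $f=\mathbb{I}_A$, which is nonnegative, Borel and bounded by $1$, so (R1) and (R2) hold. For the martingales we choose the \emph{exact} expectation functions
\[
U_f(\loc,\val)=L_f(\loc,\val)=\E_{(\loc,\val)}[f(\val_{\timeterm}^\out)]-f(\val^\out),
\]
where $\E_{(\loc,\val)}$ is the expectation of the run started from state $(\loc,\val)$. These functions are Borel measurable as a standard consequence of the kernel construction of the Markov chain. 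At $\locterm$ they equal $f(\val^\out)-f(\val^\out)=0$, because the terminal self-loop leaves the valuation unchanged. The Expected $f$-decrease and $f$-increase conditions hold with equality by the Markov property: the value $h(\loc,\val)=\E_{(\loc,\val)}[f(\val^\out_\timeterm)]$ is harmonic, i.e.\ it equals the expectation of $h$ at the successor state. Here almost-sure termination is needed so that the terminal output is defined almost surely from every reachable state. If some invariant states are unreachable and fail to terminate almost surely, we define $h$ there via $\liminf$ (or set it to $0$) and use that $h$ is still a fixed point of the one-step operator there.

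The step I expect to be the main obstacle is OST-soundness. I plan to use condition (C2). We have $Y_i=U_f(\loc_i,\val_i)+f(\val_i^\out)=h(\loc_i,\val_i)\in[0,1]$, so $|Y_i|<2$ uniformly, and $\E[|Y_i|]<\infty$ is immediate. The same holds for $L_f$. This is exactly why the indicator is a better choice than a general $f$: boundedness makes (C2) available without any assumptions on termination time. Finally, (R5) follows directly. Its left side is $h(\locinit,\vecinit^1)=\mu_{\vecinit^1}(A)$ and its right side is $e^\epsilon h(\locinit,\vecinit^2)=e^\epsilon\mu_{\vecinit^2}(A)$, so (R5) is the assumed violation. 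One point remains to check. If the invariant $I$ contains states outside the reachable set, the defining conditions must still hold there. Since $h$ is defined for every state, the harmonic identity holds at every state where the kernel is defined, so the invariant imposes no extra burden.
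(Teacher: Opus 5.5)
Your proposal is correct and follows essentially the same route as the paper. Soundness combines Theorem~\ref{thm:esm-soundness} with extending the $\epsilon$-DP inequality from events to non-negative $f$; you do this via simple functions and monotone convergence, where the paper uses the partition/supremum form of the Lebesgue integral. Completeness takes $f=\mathbb{I}_A$ and $U_f=L_f=\probm^{\pCFG(\loc,\val)}[\Output(A)]-f(\val^\out)$, which satisfies the martingale conditions with equality by flow conservation and is OST-sound via (C2).

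One side remark needs correcting. On unreachable, non-a.s.-terminating invariant states, setting $h$ to $0$ would break the harmonic identity in general. Simply using the reachability probability everywhere, as the paper does (or your $\liminf$ variant), is defined and harmonic at every state. Your direct evaluation $h(\locinit,\vecinit^i)=\mu_{\vecinit^i}(A)$ is also the right justification for (R5).
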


\begin{proof}[Proof Sketch] For soundness, we show that if $\pCFG$ is $\epsilon$-DP, then for every non-negative function $f$ and similar inputs $x_1,x_2$, it holds that $\underset{\val \sim \mu_{\val_1}}{\E}{[f(\val^\out)]} \leq e^\eps \underset{\val \sim \mu_{\val_2}}{\E}{[f(\val^\out)]}$. Hence, if an expectation super/sub-martingale witness exists, then $\pCFG$ is not $\epsilon$-DP, due to Theorem \ref{thm:esm-soundness}.

For completeness, we show that if $\pCFG$ is not $\epsilon$-DP, i.e. a pair of similar inputs $x_1 \similar x_2$ and set of outputs $A$ exist such that $\mu^\pCFG_{\vecinit^1}(A) > e^\eps \mu^\pCFG_{\vecinit^2}(A)$, then the indicator function of $A$ can be extended to an expectation super/sub-martingale witness for $\epsilon$-DP refutation of $\pCFG$. 
\end{proof}

Note that non-negativity of $f$ in condition (R2) is necessary for the soundness of our proof rule as shown in Appendix~\ref{app:ref-sound-complete:proof}.

\begin{example}[Witness for $\epsilon$-DP refutation of RR-1]
    Consider the randomized response mechanism in Fig.~\ref{fig:RR-1}. The tuple $\big((1,0), (0,0), f, U_f, L_f\big)$, with $f, U_f$ and $L_f$ defined as in Example \ref{ex:esm-RR}, is an expectation super/sub-martingale witness for $\epsilon$-DP refutation for every $\epsilon < \ln 3$, since: 
    \begin{compactitem}
        \item[(R1)] $(1,0) \similar (0,0)$ by the definition of the similarity relation in Fig.~\ref{fig:RR-1}.
        \item[(R2)] $f(\out) = \out^2$ is non-negative and Borel-measurable.
        \item[(R3)] $L_f$ is an LESM for $f$ in $\pCFG$ and $(\pCFG,L_f,f)$ is OST-sound as shown in Example \ref{ex:esm-RR}.
        \item[(R4)] $U_f$ is a UESM for $f$ in $\pCFG$ and $(\pCFG,U_f,f)$ is OST-sound as shown in Example \ref{ex:esm-RR}.
        \item[(R5)] $L_f\big(\locinit,(1,0)\big)+f(0)=0.75$ and  $U\big(\locinit,(0,0)\big)+f(0) = 0.25$, hence (R5) holds for every $\epsilon< \ln 3$.
    \end{compactitem}
\end{example}

\section{Automated Algorithm for Differential Privacy Refutation} \label{sec:algo}

We now present our algorithm for automated $\epsilon$-DP refutation, by synthesizing an instance of our expectation super/sub-martingale witness that we introduced in Section~\ref{sec:proofrule}. Since the problem of verification (and hence refutation) of DP properties is known to be undecidable~\cite{BartheCJS020}, we cannot hope for a sound and complete algorithm that is applicable to all PPs. Therefore, we instead aim for a sound and conditionally semi-complete algorithm. By soundness, we mean that our algorithm provides formal guarantees on the correctness of its output. By conditional semi-completeness, we mean that the algorithm is guaranteed to refute DP for a {\em subclass of non-DP PPs} which we will formally characterize.  

\parag{Algorithm assumptions.} Given the undecidable nature of the DP refutation problem, our algorithm imposes several assumptions towards enabling its automation:
\begin{compactenum}
    \item {\em Polynomial programs.} We consider PPs and corresponding pCFGs in which all arithmetic expressions are polynomial expressions over program variables. Furthermore, by introducing dummy variables for expressions appearing in transition guards, we assume that arithmetic expressions appearing in transition guards are linear. The latter assumption is without loss of generality and is imposed for more efficient quantifier elimination (see details below).
    \item {\em Finite and computable moments of sampling instructions.} We assume that all sampling instructions appearing in our PPs have finite and computable moments. That is, for each probability distribution $\delta$ appearing in a sampling instruction and for each $p \in \mathbb{N}$, the $p$-th moment $\mathbb{E}_{X \sim \delta}[|X|^p]$ is finite and can be computed by our algorithm. This assumption holds by default for most standard probability distributions such as Laplace, Exponential, Normal or Uniform distribution, for which look-up tables for moment values are readily available.
    \item {\em Polynomial similarity relation.} We assume that the similarity relation $\Phi(\vecinit^1,\vecinit^2)$ can be expressed by a conjunction of polynomial inequalities over variable valuations $\vecinit^1$ and $\vecinit^2$.
    \item {\em Linear supporting invariants.} Recall, in Definitions~\ref{def:uesm} and~\ref{def:lesm}, we defined U/LESMs with respect to supporting invariants. Our algorithm assumes that such invariants are provided as $I(\loc)$ for every $\loc \in \locs$, and furthermore that $I(\loc)$ is given as a conjunction of linear inequalities over program variables. In practice, and in our prototype implementation, linear supporting invariants can be efficiently and automatically computed by off-the-shelf invariant generation tools for linear and polynomial programs~\cite{aspic,ChatterjeeFGG2020}. When synthesizing invariants for PPs, we treat sampling instructions as non-deterministic assignments.
\end{compactenum}

\parag{Algorithm outline: Template-based synthesis of expectation super/sub-martingale witnesses.} Our algorithm takes as input a pCFG $\pCFG=(\locs,\pvars,\Vout,\locinit,\thetainit,\transitions,\guards,\updates)$ with polynomial arithmetic expressions and sampling instructions that have finite and computable moments, a linear invariant~$I$, a polynomial similarity relation $\Phi$, and a privacy budget $\epsilon \geq 0$. In addition, it takes as input a maximal polynomial degree parameter $D \in \mathbb{N}$ to be used in the synthesis algorithm.

In order to refute $\epsilon$-DP of the pCFG $\pCFG$, the algorithm computes an instance $(\vecinit^1, \vecinit^2, f, U_f, L_f)$ of an expectation super/sub-martingale witness for $\epsilon$-DP refutation. This is done by following the template-based synthesis approach to simultaneously compute each of the five elements of the tuple. The synthesis proceeds in four steps. 

\parag{Step~1: Setting up templates.} The first step of the algorithm consists of setting up symbolic templates for each element of the tuple $(\vecinit^1, \vecinit^2, f, U_f, L_f)$. For two inputs $\vecinit^1$ and $\vecinit^2$, the symbolic templates are defined by introducing two vectors of symbolic variables $\vecinit^1 = (\vecinit^1[1],\dots,\vecinit^1[|\pvars|])$ and $\vecinit^2 = (\vecinit^2[1],\dots,\vecinit^2[|\pvars|])$ which denote the values of each program variable on the two inputs. The template for the function $f$ over pCFG outputs is defined by introducing a symbolic polynomial function over the output variables of degree at most $D$, i.e.
$f(\val^\out) = \sum_{m \in M^D(\Vout)} c_m \cdot m$, 
where $M^D(\pvars)$ is the set of all monomials of degree at most $D$ over $\Vout$ and $c_m$'s are the real-valued symbolic template variables. Finally, the templates for the UESM $U_f$ and the LESM $L_f$ are defined as symbolic polynomial functions over the program variables for each location $\loc \in \locs$ in the pCFG:
\begin{equation}\label{eq:template}
        U_f(\loc,\val) = \begin{cases}
                            \sum\limits_{m \in M^D(\pvars)} s_{\loc,m} \cdot m  & \loc \neq \locterm \\ 
                            0 & \loc = \locterm
                            \end{cases}, \,\,\,\,\,\,\,\,\,\,
        L_f(\loc,\val) = \begin{cases}
                            \sum\limits_{m \in M^D(\pvars)} t_{\loc,m} \cdot m  & \loc \neq \locterm \\ 
                            0 & \loc = \locterm
                            \end{cases},\\ 
    \end{equation}
    where $M^D(\pvars)$ is the set of all monomials of degree at most $D$ defined over $\pvars$ and $s_{\loc,m}$'s and $t_{\loc,m}$'s are the real-valued symbolic template variables. We explicitly require both the UESM and the LESM to be equal to $0$ for $\loc = \locterm$ as this is required by the Zero on output condition of UESMs and LESMs, see Definition~\ref{def:uesm} and Definition~\ref{def:lesm}.
    
    The values of the symbolic template variables at this stage are unknown. The later steps of the algorithm will then compute concrete values for these symbolic variables, which will together give rise to an instance of our $\epsilon$-DP refutation witness. 

\parag{Step~2: Constraint collection.}
    In this step, the algorithm collects a system of constraints over the symbolic template variables which together entail that the tuple $(\vecinit^1, \vecinit^2, f, U_f, L_f)$ defines a valid expectation super/sub-martingale witness for $\epsilon$-DP refutation. To ensure this, we need to collect constraints for the five defining conditions in Definition~\ref{def:proofrule}:
    \begin{compactitem}
        \item[(R1)] The two inputs $\vecinit^1$ and $\vecinit^2$ need to satisfy the similarity relation. This is captured by the constraint $[\vecinit^1 \sim_\Phi \vecinit^2]$. The constraint is collected by substituting the symbolic template variables that define $\vecinit^1$ and $\vecinit^2$ into the conjunction of polynomial inequalities that define the similarity relation $\Phi$ (recall that this was one of the algorithm assumptions).
        \item[(R2)] $f$ must be non-negative on all reachable output evaluations of $\pCFG$. This is captured by the constraint $[\forall \val \in \R^\pvars.\, \val \in I(\locterm) \Rightarrow f(\val^\out) \geq 0]$.
        \item[(R3)] $L_f$ must be an LESM for $f$ and the triple $(\pCFG,L_f,f)$ must be OST-sound. For $L_f$ to be an LESM for $f$, it should satisfy the Zero on output and the Expected $f$-increase conditions in Definition \ref{def:lesm}. The former is satisfied by the definition of $L_f$ in eq.~\eqref{eq:template}. For the latter, for each $\loc \in \locs$ and each transition $\tau=(\loc,\Pr)$, the algorithm collects the following constraint: 
        \begin{equation} \label{eq:lesm}
        \forall \val \in \R^\pvars. \, \val \in I(\loc) \cap \guards(\tau) \Rightarrow L_f(\loc,\val) + f(\val^\out) \leq \sum_{\loc' \in \locs} \prob(\loc') \E[L_f(\loc', \val') + f(\val'^\out)]
        \end{equation}
        where every occurrence of $f$ and $L_f$ is substituted by their symbolic polynomial template. Note that the symbolic expression for the expectation can be explicitly computed by the linearity of the expectation operator and since all moments of the sampling instructions are assumed to be finite and computable.

        The system of constraints that entail that the triple $(\pCFG, L_f, f)$ is OST-sound is analogous to the one constructed in the work~\cite{ChatterjeeGNZ24} on automated computation of UESMs and LESMs. Hence, for the interest of space, we omit the details here and refer the reader to Appendix~\ref{app:algoost}.
        \item[(R4)] $U_f$ must be a UESM for $f$ and the triple $(\pCFG,U_f,f)$ must be OST-sound. Similarly as above, $U_f$ is zero on output by definition in eq.~\eqref{eq:template}. For the expected $f$-decrease condition, the algorithm collects the same constraint as in eq.~\eqref{eq:lesm}, and replaces $\leq$ with $\geq$ and $L_f$ with $U_f$. 
        \item[(R5)] Finally, the algorithm collects the following constraint where $f$, $U_f$ and $L_f$ are substituted by their symbolic polynomial templates: $L_f(\locinit,\vecinit^1) + f(\vecinit^{1 \, \out}) > e^\epsilon \cdot \left( U_f(\locinit,\vecinit^2) + f(\vecinit^{2 \, \out}) \right)$.
    \end{compactitem}

\parag{Step~3: Constraint simplification via $\forall$ quantifier elimination.}
    Looking at the constraints collected in Step~2, we observe that the constraints for conditions (R1) and (R5) are purely existentially quantified polynomial constraints over the symbolic template variables. However, constraints for conditions (R2), (R3) and (R4) are of the form $[\forall \val \in \R^n. \, P(\val) \Rightarrow Q(\val)]$, where $P$ is a conjunction of linear inequalities over $\pvars$ (since we assume that all transition guards in the pCFG and the supporting invariant $I$ are linear) and $Q$ is a polynomial inequality over $\pvars$ (since we assume that all arithmetic expressions in our pCFGs are polynomial). The algorithm then utilizes Handelman's Theorem \cite{handelman} from algebraic geometry to translate each of these constraints into a purely existentially quantified set of linear inequalities over the symbolic template variables as well as auxiliary variables introduced by the translation. The translation is sound in the sense that every solution to the new system of constraints gives rise to a solution of the original system of constraints, while it is also complete under the additional condition that the supporting invariant $I$ defines a closed and bounded set~\cite{AsadiC0GM21}. Since this is a standard procedure in template-based synthesis approaches to program analysis, we omit the details of this translation and refer the reader to~\cite{AsadiC0GM21,ChatterjeeGNZ24}. 

\parag{Step~4: Constraint solving.} 
    Denote by $\Psi$ the resulting system of purely existentially quantified polynomial constraints obtained upon constraint collection in Step~2 and the application of universal quantifier elimination in Step~3. To synthesize an instance $(\vecinit^1, \vecinit^2, f, U_f, L_f)$ of an expectation super/sub-martingale witness for $\epsilon$-DP refutation, the algorithm solves the system of constraints $\Psi$ by employing an off-the-shelf SMT solver. 
    If the SMT solver finds a solution, the algorithm reports ''$\epsilon$-DP refuted'' and returns the witness $(\vecinit^1, \vecinit^2, f, U_f, L_f)$ to the user. Otherwise, the algorithm reports ''Unknown''.


\parag{Algorithm soundness, semi-completeness, complexity.} The following theorem establishes soundness, conditional semi-completeness and an upper bound on the complexity of our algorithm. The soundness and conditional semi-completeness follow by soundness and semi-completeness of all steps of the algorithm. The complexity upper bound follows since the $\epsilon$-DP refutation synthesis problem is reduced to solving a sentence in the existential theory of the reals which can be done in PSPACE. 

\begin{theorem}[Soundness, conditional semi-completeness, complexity, Proof in Appendix~\ref{app:algoproof}] \label{thm:algo-sound-complete}
    Let $\pCFG=(\locs,\pvars,\Vout,\locinit,\thetainit,\transitions,\guards,\updates)$ be a pCFG, $\Phi$ a similarity relation over $\thetainit$, $\epsilon \geq 0$ a privacy budget, and $I$ an invariant in $\pCFG$ that satisfy all the algorithm assumptions. The following claims hold:
    \begin{compactenum}  
        \item {\em (Soundness)} If the algorithm returns a tuple $T=(\vecinit^1,\vecinit^2,f,U_f,L_f)$, then $\pCFG$ is not $\eps$-DP and $T$ is a valid instance of an expectation super/sub-martingale witness for $\epsilon$-DP refutation.
        \item {\em (Conditional Semi-completeness)} If $\pCFG$ is not $\eps$-DP, the invariant $I$ defines a closed and bounded set, and there exists an instance of an expectation super/sub-martingale witness for $\epsilon$-DP refutation $(\vecinit^1,\vecinit^2,f,U_f,L_f)$ where $f$, $L_f$ and $U_f$ can be specified via polynomial expressions, then for a sufficiently large value of the maximum degree parameter $D$, the algorithm is guaranteed to find an instance of an expectation super/sub-martingale witness for $\epsilon$-DP refutation.
        \item {\em (Complexity)} The runtime complexity of the algorithm is PSPACE in the size of the pCFG $\pCFG$, polynomial expressions that specify the similarity relation $\Phi$, the invariant $I$ and the privacy budget $\epsilon$, for any fixed value of the maximum polynomial degree parameter $D$.
    \end{compactenum}
\end{theorem}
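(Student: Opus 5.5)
We prove the three claims separately. Each one is reduced either to Theorem~\ref{thm:ref-sound-complete} or to standard properties of the translation used in Step~3.

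\emph{Soundness.} Suppose the SMT solver returns a solution of $\Psi$. We instantiate the templates with the returned values of the symbolic variables to obtain concrete $\vecinit^1,\vecinit^2$ and polynomials $f,U_f,L_f$. We then verify (R1)--(R5) of Definition~\ref{def:proofrule} one by one.
\begin{compactitem}
    \item (R1) and (R5) hold literally, since they were added to $\Psi$ without any transformation.
    \item For (R2)--(R4), we use soundness of the Handelman-based translation of Step~3. Any solution of the translated existential linear system yields, for the original universally quantified constraint $\forall \val.\,P(\val)\Rightarrow Q(\val)$, a certificate that $Q$ is a nonnegative combination of products of the linear inequalities in $P$. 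Hence the implication holds.
    \item Consequently, the Expected $f$-decrease/increase conditions hold on $I(\loc)\cap\guards(\tau)$, and Zero on output holds by construction of the templates in eq.~\eqref{eq:template}.
    \item The OST-soundness constraints of Appendix~\ref{app:algoost} are sound by the corresponding result of~\cite{ChatterjeeGNZ24}.
\end{compactitem}
One subtlety concerns (R2). We only impose $f\ge 0$ on $I(\locterm)$, while Definition~\ref{def:proofrule} asks for non-negativity everywhere. We handle this by redefining $f$ to be $\max(f,0)$, or $f$ itself, outside $I(\locterm)$. This does not change any expectation or martingale condition, because every reachable output lies in $I(\locterm)$. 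Alternatively, we note that the soundness direction of Theorem~\ref{thm:ref-sound-complete} only uses non-negativity on the support of the output distributions. With the witness in hand, Theorem~\ref{thm:ref-sound-complete} gives that $\pCFG$ is not $\epsilon$-DP.

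\emph{Conditional semi-completeness.} Assume a witness $(\vecinit^1,\vecinit^2,f,U_f,L_f)$ with polynomial $f,U_f,L_f$ exists. We choose $D$ at least the maximum of their degrees and of the degrees arising in the OST-soundness templates. The concrete coefficients then satisfy every constraint of Step~2, so the original system before quantifier elimination is satisfiable. Since $I$ is closed and bounded, the completeness of the Handelman translation~\cite{AsadiC0GM21} applies to each universally quantified constraint. It yields Handelman multipliers, possibly of higher degree; we enlarge $D$, and hence the Handelman degree bound, accordingly.

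This is the main obstacle. Handelman's theorem gives a representation only for \emph{strictly} positive polynomials on a compact polytope, whereas the witness may satisfy (R2)--(R4) only non-strictly. We plan to exploit the strict slack in (R5) to perturb the witness:
\begin{compactitem}
    \item replace $f$ by $f+\delta$ for a small $\delta>0$, which makes (R2) strict on $I(\locterm)$;
    \item replace $U_f$ by $U_f+\delta'(\cdot)$, using a suitable strictly decreasing correction such as a multiple of a bounded ranking-type term, and $L_f$ by the analogous correction, so that the expected-change inequalities become strict.
\end{compactitem}
For bounded-time programs, one way to obtain these corrections is to add location-dependent constants. For small enough perturbations, (R5) still holds because it is strict. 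Once all constraints are strict, Handelman's theorem applies, and the solver (which is complete for the existential theory of the reals) finds a solution. If this perturbation argument fails in general, the fallback is to assume the strict version as part of the hypothesis, in line with the ``conditional'' qualifier.

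\emph{Complexity.} For fixed $D$, the number of template variables and the number of Handelman products are polynomial in the size of $\pCFG$, $\Phi$, $I$ and $\epsilon$, and so is the size of $\Psi$. Computing the expectations only requires finitely many moments up to degree $D$, which are computable by assumption. The constant $e^\epsilon$ is handled by introducing a variable $y$ with $y > 0$, together with either an exact encoding or the observation that (R5) can equivalently be stated with a rational upper bound on $e^\epsilon$. $\Psi$ is then a sentence of the existential theory of the reals of polynomial size, whose satisfiability is decidable in PSPACE (Canny), which gives the bound.
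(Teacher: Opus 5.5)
Your argument has the same three-part skeleton as the paper's proof in Appendix~\ref{app:algoproof}: soundness of Step~3 plus the OST encoding; satisfiability of the Step~2 system plus completeness of the Handelman step and of the solver; and a polynomial-size existential-reals sentence decided in PSPACE. The paper simply asserts that the Handelman translation is complete once $I$ is closed and bounded. You correctly observe that this needs strict positivity: if $Q=\sum_\alpha c_\alpha g^\alpha$ with all $c_\alpha\ge 0$ vanishes at a point where every $g_i>0$, then all $c_\alpha=0$. So a witness whose expected-change polynomial has such a zero without vanishing identically is never certified by Step~3, at any degree.

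Your repair closes this only in special cases. The shift $f\mapsto f+\delta$ is harmless, since it adds $\delta$ to both sides of every expected-change condition ($\sum_{\loc'}\prob(\loc')=1$). The real work is the correction $h$ for $U_f$ and $L_f$.
\begin{itemize}
\item Location-dependent constants fail as soon as the pCFG has a cycle each of whose transitions moves to the next location with probability one, e.g.\ a bounded for-loop with deterministic body. The constants would have to strictly decrease around the cycle. Such loops need a variable-dependent polynomial $h$, whose existence you do not establish.
\item The $h$ you describe is a polynomial with $h(\loc,\val)-\sum_{\loc'}\prob(\loc')\E[h(\loc',\val')]\ge c>0$ on every non-terminal transition. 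It is therefore a ranking supermartingale bounded on the compact $I$. Then $h(\loc_{n\wedge\timeterm},\val_{n\wedge\timeterm})+c\,(n\wedge\timeterm)$ is a supermartingale, which gives $\E[\timeterm]\le 2\sup_I|h|/c$.
\item An a.s.-terminating loop with compact reachable states can have $\E[\timeterm]=\infty$. For example, sample $x$ uniformly from $[0,1]$, then leave the loop in each iteration when a fresh uniform sample falls below $x$. No such $h$ exists there.
\end{itemize}
Your fallback of assuming a strict witness proves a weaker statement than claim~2. The paper's proof has the same hole, because its appeal to Handelman's theorem carries the same strictness requirement. So you are no less rigorous than the paper here, but you should state explicitly the class of pCFGs for which your argument yields semi-completeness (loop-free ones, or ones admitting such an $h$).

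Three smaller points.
\begin{itemize}
\item Your first fix for (R2) does not work. (R5) evaluates $f$ at $\vecinit^{1\,\out}$ and $\vecinit^{2\,\out}$, and Definitions~\ref{def:uesm} and~\ref{def:lesm} evaluate it at $\val^\out$ for every $\val\models I(\loc)$ with $\loc\neq\locterm$. None of these points need be outputs of terminal states, so redefining $f$ off $I(\locterm)$ changes those conditions. Your alternative is the right argument: soundness only needs $f\ge 0$ almost surely under $\mu_{\vecinit^1}$ and $\mu_{\vecinit^2}$, since both expectations then coincide with those of $\max(f,0)$. It also matches the relaxed reading of (R2) used in Step~2 of the algorithm.
\item For $e^\epsilon$, the existential theory of the reals has no exact encoding when $e^\epsilon$ is transcendental. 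A rational $r\ge e^\epsilon$ is sound, because $U_f(\locinit,\vecinit^2)+f(\vecinit^{2\,\out})\ge\E_{\vecinit^2}[f]\ge 0$. It is complete only if $r$ is close enough to $e^\epsilon$ for the slack in (R5) to absorb the difference, so it is not an equivalent restatement.
\item Your claim that the given witness satisfies every Step~2 constraint depends on which OST option is encoded. The (C3) encoding in Appendix~\ref{app:algoost} is stronger than (C3). With compact $I$, condition (C2) holds automatically for polynomial $U_f+f$ and $L_f+f$, so fix that option in your semi-completeness argument.
\end{itemize}
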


\begin{remark}
    We emphasize two points regarding our algorithm:
    \begin{compactenum}
        \item Our algorithm is applicable to both discrete and continuous probabilistic programs. Specifically, in case the input variables of the program are integers, the system of constraints solved in Step~4 will become a mixed-integer system as $\vecinit^1$ and $\vecinit^2$ are required to be integers. 
        \item Given a pCFG $\pCFG$ and technical parameter $D$, our algorithm is fully automated. The invariant $I$ can be generated automatically by state-of-the-art tools such as ASPIC\cite{aspic} (used in our experiments). Additionally, our method could be paired with the template-based invariant generation method of \cite{ChatterjeeFGG2020} to synthesize the invariants and the UESM/LESM certificate simultaneously. 
    \end{compactenum}
\end{remark}




\section{Experimental Evaluation}\label{sec:experiments}

We implemented a prototype of our method which we call SuperDP\footnote{Available at \href{https://github.com/ChatterjeeGroup-ISTA/SuperDP}{https://github.com/ChatterjeeGroup-ISTA/SuperDP} and \href{https://doi.org/10.5281/zenodo.18930113}{https://doi.org/10.5281/zenodo.18930113}}, standing for {\bf Super}martingale-based {\bf D}ifferential {\bf P}rivacy refutation. Our prototype tool takes as input (i) a probabilistic program which is translated into a pCFG $\pCFG$, (ii) a similarity relation $\Phi$, and (iii) a privacy budget $\eps \geq 0$. It then runs the algorithm in Section~\ref{sec:algo} to refute $\eps$-DP of $\pCFG$. If successful, the tool returns an expectation super/sub-martingale witness $(\val_1,\val_2,f,U_f,L_f)$ that refutes $\eps$-DP of $\pCFG$. 

\parag{Research questions.} The goal of our experiments is to answer three research questions (RQs):
\begin{compactitem}
    \item[(RQ1)] {\em Effectiveness.} Is SuperDP effective in refuting $\epsilon$-DP of stochastic mechanisms, including those that are out of the reach of existing methods?
    \item[(RQ2)] {\em Efficiency.} Is SuperDP able to refute $\epsilon$-DP at runtimes competitive with the state of the art?
    \item[(RQ3)] {\em Limitations.} What are the practical limitations of our approach?
\end{compactitem}

\subsection{Experimental Setup}

\parag{Implementation.} SuperDP is implemented in Java and uses Z3 \cite{z3} and MathSAT5 \cite{mathsat} as backend SMT-solvers to solve the system of polynomial constraints obtained in Step~4 of our algorithm in Section~\ref{sec:algo}. It also uses ASPIC~\cite{aspic} to generate supporting linear invariants. Given an input as stated above, SuperDP runs the algorithm in Section~\ref{sec:algo} with maximal polynomial degrees $D=1,\dots, 6$. We note that our algorithm requires the number of program variables to be finite and statically bounded. Hence, if the input PP contains an array whose size is an input parameter (as do the first 7 benchmarks in Table \ref{tab:exp-results}), we start the DP refutation analysis with array size $2$. We then iteratively increment the array size by $1$ and run our algorithm in Section~\ref{sec:algo} until we reach an array size for which SuperDP successfully refutes $\epsilon$-DP, or until timeout is reached ($5$ minutes per benchmark). 

\parag{Baselines.} We compare the performance of our prototype against two state of the art tools for reasoning about DP that support $\epsilon$-DP refutation:
\begin{compactitem}
    \item {\em Static analysis baseline.} CheckDP \cite{WangDKZ20} is the only available static analysis tool that is fully automated and supports $\epsilon$-DP refutation, hence we use it as the first baseline. CheckDP tries to find an alignment between random variables along executions on a pair of similar inputs. If an alignment is found, the program is guaranteed to satisfy $\epsilon$-DP; otherwise, CheckDP refutes $\epsilon$-DP by computing a triple $(\vecinit^1,\vecinit^2,A)$ with $\mathbb{P}_{\vecinit^1}[\Output(A)] > e^\eps \cdot \mathbb{P}_{\vecinit^2}[\Output(A)]$. 
    \item {\em Dynamic analysis baseline.} StatDP \cite{DingWWZK18} is a statistical framework for DP refutation. It carefully searches for a pair of similar inputs that might violate the DP property, runs the program many times on those inputs, and evaluates them by hypothesis testing and p-value computation. Smaller p-values indicate stronger evidence against the DP claim. In what follows, we assume that StatDP refutes $\epsilon$-DP if the p-value it returns is less than $0.1$.
\end{compactitem}

\parag{Other tools.} We also attempted to run DiPC \cite{BartheCJS020} as a baseline, but it always terminated with runtime errors, so we are unable to report results for it. We note, however, that only three of our benchmarks (\texttt{RR-1},\texttt{RR-2}, and \texttt{lowprob}) are compatible with their supported syntax. On the other hand, DP-Sniper \cite{BichselSBV21} and DP-Finder \cite{BichselGDTV18} are statistical methods that require huge computational resources (they use 500 GB of memory and 128 CPU cores in their experiments), hence we omit them from our experiments.

\subsection{Benchmarks and Setup}

All experiments were run on an Ubuntu 24.04 machine with an 11th Gen Intel Core i5 CPU and 16 GB RAM with a timeout of 5 minutes. Our prototype and CheckDP use only one core of the CPU, whereas StatDP was allowed to use up to 8 cores. For benchmarks, we consider 15 probabilistic programs collected from the DP literature and run our prototype and the two baselines on them:
\begin{compactitem}
    \item The first 9 benchmarks are standard stochastic mechanisms from the differential privacy literature \cite{DworkR14} (first 9 rows in Table \ref{tab:exp-results}, where \texttt{SVT} stands for "Sparse Vector Technique" and \texttt{RR} stands for "Randomized Response"). \texttt{BadSmartSum} is an incorrect variant of \texttt{SmartSum} introduced in \cite{WangDKZ20}. The parameter $M$ in both \texttt{SmartSum} and \texttt{BadSmartSum} benchmarks is set to $3$. 
    We consider two versions of the randomized response mechanism, \texttt{RR-1} is shown in Fig. \ref{fig:RR-1} and \texttt{RR-2} is shown in Fig. \ref{fig:RR-2} of Appendix~\ref{app:benchmarks}.
    \item The \texttt{geometric} mechanism of Fig. \ref{fig:geometric} (in Appendix~\ref{app:benchmarks}) first samples the noise variable $z$ from a geometric distribution (lines $l_2$ to $l_6$) and $\mathit{sign}$ from a Bernoulli and adds $(-1)^{\mathit{sign}}\cdot z$ to the input. This mechanism is particularly interesting because the geometric sampling is done by an almost-sure terminating loop that is not statically bounded. 
    \item The \texttt{PrivBernoulli} mechanism was considered in LightDP \cite{ZhangK17} and is shown in Fig.~\ref{fig:privbernoulli} (in Appendix~\ref{app:benchmarks}). 
    Intuitively, given and input $x$, it returns the result of an $x$-biased coin toss. We consider two variants of this mechanism: (i) a variant where the input $x$ can take any value in the range $[0,1]$, which provides no DP guarantees, and (ii) a variant where $x$ is limited to $[\frac{1}{3},\frac{2}{3}]$, which satisfies $(\ln 2)$-DP. CheckDP does not support these benchmarks, since all randomness in their input has to come from Laplacian samplings. 
    \item The \texttt{lowprob} benchmark is motivated by the StatDP paper \cite{DingWWZK18} and is shown in Fig.~\ref{fig:lowprob} (in Appendix~\ref{app:benchmarks}). Given any input, this mechanism returns 0 with probability at least $1-10^{-6}$. This means that a statistical sampling-based method requires many samples with $x=1$ as input in order to observe $1$ on output. Without such an observation, it cannot differentiate between \texttt{lowprob} and a mechanism that always returns $0$ and satisfies $0$-DP. 
    \item \texttt{RE} stands for "Random element" which is a toy example that, given an array of length 5, returns one of the elements of the array uniformly at random, hence trivially violating privacy. It satisfies $(\epsilon,\delta)$-DP for $\epsilon = 0$ and $\delta = 0.2$, but not $\eps$-DP for any value of $\eps$.
    %
    \item The \texttt{Uniform noise} mechanism \cite{GengV16} is similar to the histogram mechanism, with the only difference being that it adds $\mathit{Uniform}_{[-0.1,0.1]}$ noise rather than Laplacian noise to the elements of the input array.
\end{compactitem}

Among our benchmarks, the \texttt{geometric} benchmark satisfies condition (C4) of Definition \ref{def:ost}, while all others satisfy condition (C1) of OST-soundness (statically bounded termination). 


\begin{table}[t] 
\centering
\resizebox{!}{3cm}{
\texttt{
\begin{tabular}[!t]{|c|c|ccc|cc|cc|}
    \hline
    \multirow{2}{*}{Benchmark} & \multirow{2}{*}{DP} & \multicolumn{3}{c|}{SuperDP} & \multicolumn{2}{c|}{CheckDP \cite{WangDKZ20}} & \multicolumn{2}{c|}{StatDP \cite{DingWWZK18}} \\
    && Refuted DP & Time & $|M^D|$ & Refuted DP & Time & Refuted DP & Time \\
    \hline
    \hline 
    PartialSum & 1 & \cellcolor{green!30} 0.9 & 1.3 & 70 & \cellcolor{green!30} 0.9 &  \cellcolor{green!30} 0.7 &  \cellcolor{green!30} 0.9 & 3.4 \\
    \hline 
    Histogram & 1 & \cellcolor{green!30} 0.9 & \cellcolor{green!30} 0.3 & 35 & \cellcolor{green!30} 0.9 & 0.4 & TO & TO\\
    \hline
    SmartSum & 2 & \cellcolor{green!30} 1.9 &  131.1 & 210 & \cellcolor{green!30} 1.9 &  \cellcolor{green!30} 1.5 & TO & TO\\
    BadSmartSum & $\infty$ & \cellcolor{green!30} >15 &  \cellcolor{green!30}0.4 & 28 & \cellcolor{green!30} >15 & 0.9 & TO & TO \\
    \hline 
    NoisyMax & 1 & TO & TO & - & \cellcolor{green!30} 0.9 & 13.7 & \cellcolor{green!30} 0.9 & \cellcolor{green!30} 12.4 \\
    \hline 
    SVT & 1 & TO & TO & - & \cellcolor{green!30} 0.9 & 60.1 & \cellcolor{green!30} 0.9 & \cellcolor{green!30} 52 \\ 
    \hline 
    Gaussian & $\infty$ & \cellcolor{green!30} >15 & \cellcolor{green!30} 0.9 & 15 & NS & NS & TO & TO \\
    \hline 
    RR-1 & $\ln 3$ & \cellcolor{green!30} 1& \cellcolor{green!30} 0.7 & 3 & \cellcolor{gray!40} >15 & \cellcolor{gray!40} 0.9 & \cellcolor{green!30} 1 & 1.9 \\
    RR-2 & $\ln 1.5$ & \cellcolor{green!30} 0.4 & \cellcolor{green!30} 0.4 & 3 & \cellcolor{gray!40} >15 & \cellcolor{gray!40} 0.8 & \cellcolor{green!30} 0.4 & 1.8 \\
    \hline 
    geometric & $\ln 2$ & \cellcolor{green!30} 0.6 & \cellcolor{green!30} 0.7 & 15 & F & F & \cellcolor{gray!40} 1.3 & \cellcolor{gray!40} 17.8\\
    \hline 
    PrivBernoulli-1 & $\infty$ &  \cellcolor{green!30} >15 & \cellcolor{green!30} 0.1 & 3 & NS & NS & 11.5 & 1.9 \\
    PrivBernoulli-2 & $\ln 2$ &  \cellcolor{green!30} 0.6 & \cellcolor{green!30} 0.1 & 3 & NS & NS & 0 & 12.1 \\
    \hline 
    lowprob & $\infty$ & \cellcolor{green!30} >15 & \cellcolor{green!30} 0.1 & 3 & F & F & F & F \\
    \hline 
    RE & $\infty$ & \cellcolor{green!30} >15 & \cellcolor{green!30} 0.1 & 28 & \cellcolor{green!30} >15 & 1.2 & F & F \\
    \hline 
    Uniform noise & $\infty$ & \cellcolor{green!30} >15 & \cellcolor{green!30} 2.2 & 84 & NS & NS & F & F \\ 
    \hline 
\end{tabular}
}
}
\caption{Summary of our experimental results. The "\texttt{DP}" column shows the theoretically proved smallest value of $\epsilon$ for which a mechanism is $\epsilon$-DP, where $\infty$ means that the benchmark is not $\eps$-DP for any $\eps$. The "\texttt{Refuted DP}" columns show the largest $\epsilon$ for which $\epsilon$-DP was successfully refuted by each tool (larger is better), the "\texttt{Time}" columns show the runtime of each tool in seconds, and the $|M^D|$ column indicates the size of the template used in SuperDP. Green cells indicate the tool with the best performance on each benchmark. "\texttt{NS}" stands for "Not Supported"\textsuperscript{1}, "\texttt{TO}" for "Timeout", and "\texttt{F}" for "Fail". Grey cells indicate some incorrect answers by the baseline tools that we observed in our experiments.\textsuperscript{2}}
\label{tab:exp-results}
\end{table}

\newcommand{\customfootnotetext}[2]{{
  \renewcommand{\thefootnote}{#1}
  \footnotetext[0]{#2}}}

\customfootnotetext{1}{The syntax of CheckDP only supports sampling instructions from Laplacian distributions, but not from e.g.~Gaussian, Uniform, or Bernoulli distributions. Hence, some of our benchmarks are not supported by CheckDP. To make CheckDP applicable to benchmarks with probabilistic branching, when providing inputs to it, we replace every instruction of the form "\texttt{if prob($c$)}" with "\texttt{$tmp$ := Lap(1); if $tmp \geq t$}" for an appropriate value of $t$ such that $\Pr[Lap(1) \geq t] \simeq c$.}

\customfootnotetext{2}{On the randomized response benchmarks (\texttt{RR-1} and \texttt{RR-2}), SuperDP and StatDP refute $\epsilon$-DP by computing correct and tight bounds on $\epsilon$. However, the result of CheckDP on these two benchmarks is unsound as the reported value of $\epsilon$ for which $\epsilon$-DP is refuted exceeds the theoretically known value of $\epsilon$ for which the mechanism is $\epsilon$-DP. We report this as a subtle bug in their implementation of counterexample validation, which we believe to be due to these examples not admitting any random variable alignment hence CheckDP refutes $\epsilon$-DP for any value of $\epsilon$. 

}

\subsection{Results}
Table \ref{tab:exp-results} summarizes our experimental results. Green cells indicate the best performing tools on each benchmark, e.g. on the \texttt{PartialSum} benchmark, all the tools could refute up to $0.9$-DP and CheckDP was the fastest taking 0.7s. We discuss our results by considering each of the research questions (RQ1-3) separately.

\begin{compactitem}
    \item[(RQ1)] The first question is concerned with the effectiveness of our method for refuting $\epsilon$-DP. We argue that the answer to this question is positive by analyzing the \texttt{Refuted DP} columns in Table \ref{tab:exp-results}. It can be seen that SuperDP refutes $\epsilon$-DP with the largest value of $\epsilon$ compared to the baselines on 13/15 benchmarks (the 2 failed instances are discussed under RQ3 below). Specifically, on the 9 benchmarks that satisfy $\epsilon$-DP for some finite value of $\epsilon$, SuperDP is able to refute $\epsilon$-DP with a $0.1$-tight value of $\epsilon$ for 7 instances, while CheckDP and StatDP refute such tight bounds for only 5 instances. On the remaining 6 benchmarks that do not satisfy $\epsilon$-DP for any finite value of $\epsilon$, SuperDP successfully refutes up to $15$-DP for all instances, while CheckDP can prove similar bounds for only 2 and StatDP for none.
    
    We note that the result of StatDP on the \texttt{geometric} benchmark is unsound. We believe this is due to the statistical nature of the method, hence the tool can output false-positive and false-negative results with small but non-negligible probability. Moreover, on the \texttt{lowprob} benchmark, StatDP cannot refute $\epsilon$-DP for any $\epsilon$ since it requires many samples to observe a single non-zero output. On the other hand, our SuperDP does not suffer from this limitation and is able to refute $\epsilon$-DP tightly. We also note that CheckDP exited with a runtime error in two instances (\texttt{geometric} and \texttt{lowprob}) without proving or disproving DP. 
    \item[(RQ2)] The second question asks about the efficiency of our approach. By analyzing the \texttt{Time} columns in Table \ref{tab:exp-results}, we argue that the answer to this question is also positive, i.e., SuperDP's runtime is competitive, and in many cases better, than the baselines. Specifically, SuperDP outperforms the baselines in terms of runtime in 11/15 cases. On the \texttt{SmartSum} benchmark, our approach takes more than 2 minutes to terminate because the benchmark has more pCFG locations (larger $|\locs|$) compared to \texttt{PartialSum} and \texttt{Histogram}, which results in larger template sizes and a larger system of constraints in the last step of our algorithm. 
    \item[(RQ3)] The third question is concerned with understanding practical limitations of our approach. Based on the results presented in Table \ref{tab:exp-results}, we observe the following limitations: 
    \begin{compactitem}
        \item {\em Symbolic reasoning about probabilities of \texttt{if}-branching.} Our tool timeouts when solving \texttt{NoisyMax} and \texttt{SVT} benchmarks, while both baselines can refute up to $0.9$-DP. This is because both benchmarks contain code fragments that look like
    \begin{align*} \label{bad-fragment}
        \begin{split}
        &x := Lap(1); \\
        &\texttt{if}(x\geq y) ~~~\dots
    \end{split}\end{align*}
    However, reasoning about such \texttt{if}-statements by using polynomial templates is difficult, since the probability of $(x \geq y)$ needs to be symbolically computed. Hence, our method sometimes struggles with examples that require explicit reasoning about probabilities of conditional branching that cannot be captured by polynomial templates.
    \item {\em Large systems of constraints.} As mentioned above, in cases like \texttt{SmartSum}, our approach needs to solve large systems of constraints which may be computationally expensive for SMT-solvers. Moreover, in the \texttt{Uniform noise} benchmark, our approach requires degree 6 polynomial templates to refute $15$-DP. Using high-degree polynomial templates results in larger systems of constraints, which may be harder for SMT-solvers to solve. 
    \end{compactitem}
\end{compactitem}

\parag{Summary.} As demonstrated by our results in Table \ref{tab:exp-results}, SuperDP can successfully refute $\epsilon$-DP for 13/15 benchmarks, at runtimes that are consistently lower than 3 seconds in all but one case. This clearly positively answers our first two research questions (RQ1) and (RQ2). We also discussed two practical limitations of our approach in answer to (RQ3). The first limitation arises in examples that require symbolic reasoning about probabilities of \texttt{if}-branching which may be a challenge when using polynomial templates, while the second limitation is due to instances where large systems of constraints need to be solved.

\section{Related Work}\label{sec:relatedwork}

Automated analysis of DP has seen a lot of interest in the verification literature. Below we discuss the literature related to refutation of DP. In Appendix \ref{app:related-works} we discuss the works on verification of DP properties.  We classify existing methods for DP refutation into (i) static analysis methods (including our work), and (ii) dynamic (sampling-based) methods.

\parag{Static analysis methods.}
These methods try to refute DP by statically analyzing the program rather than executing it. DiPC \cite{BartheCJS020} considers programs with finite inputs and outputs and does not allow assignment to real or integer variables inside a while loop. They model the given program $M$ as a finite-state discrete-time Markov chain (DTMC) and exhaustively search for similar inputs $a,a'$, and an output $b$ such that $\Pr[M(a) = b] > e^\eps \Pr[M(a')=b]$. Due to the restriction to finite domain inputs and outputs, they have to consider discretizations of programs when they contain infinite-ranges of inputs or outputs. In \cite{FarinaCG21}, the input program, which is limited to have countable inputs and outputs, is executed in a relational and symbolic way. After the execution, the method checks if there exists a coupling between the generated outputs. They also introduce several strategies to look for counterexamples for disproving DP in case their method fails to find a coupling. In contrast, our method does not impose any restrictions on the domains of program variables or places where assignments can occur in the program.

In a separate line of work, CheckDP \cite{WangDKZ20} is the successor of LightDP \cite{ZhangK17} and ShadowDP \cite{WangDWKZ19}, which is fully automated with a verify-invalidate loop that looks for randomness alignments for proving DP. If in any loop iteration, the proposed randomness alignment is invalidated by a counterexample that cannot be fixed by a new alignment, the counterexample is passed to PSI~\cite{GehrMV16} to make sure it is a valid counterexample for the claimed DP bound. The main limitations of CheckDP compared to our approach are: (i) the only source of randomness in their syntax is sampling from the Laplace distribution, (ii) they provide no completeness guarantees, and (iii) we encountered a subtle bug in their implementation of counterexample validation using PSI. In contrast, SuperDP provides soundness and conditional semi-completeness guarantees as explained in Section~\ref{sec:proofrule} and Section~\ref{sec:algo}, and supports general probability distributions including Laplace, Gaussian, and Bernoulli. 


\parag{Dynamic analysis methods.} 
In contrast to static analysis approaches that refute DP without running the input program, statistical methods execute the program multiple times and try to infer/refute DP based on the samples that they observe. StatDP~\cite{DingWWZK18} is one such method that considers input pairs satisfying specific patterns and runs the program many times on them. It then conducts hypothesis testing on the observed samples and computes a p-value, which is a probabilistic estimate of how likely it is that the program satisfies the claimed DP property. A low p-value indicates that the input program most likely does not satisfy the claimed DP. DP-Finder~\cite{BichselGDTV18} uses samples taken from the program to approximate the function $\epsilon(x,x',A)$ as the amount of privacy loss corresponding to similar inputs $x,x'$ and output set $A$. They then find a pair $(x,x')$ that maximizes the approximated privacy loss function and lastly use PSI to compute the exact value of $\epsilon(x,x',A)$. DP-Sniper~\cite{BichselSBV21} uses classifiers (e.g. logistic regression or neural networks) to compute approximations of $\epsilon(x,x',A)$. Although these methods only require a black-box access to the input program, which is an advantage, they suffer from the following limitations: (i) they are all based on approximations and do not provide formal soundness guarantees (e.g. StatDP fails on \texttt{lowprob} and \texttt{geometric} benchmarks in our experiments), (ii) they might require many samples in order to make their computations more accurate which requires huge computing power (e.g. DP-Finder's experiments were conducted with 500 GB RAM and 128 CPU cores). Our approach, in contrast, requires white-box access to the program and provides formal soundness and semi-completeness guarantees, while requiring modest computational resources.


\parag{Martingale-based probabilistic program analysis.} Our method leverages upper expectation supermartingales and lower expectation submartingales of~\cite{ChatterjeeGNZ24,ChatterjeeGNZ25} to formally and automatically reason about DP. Supermartingale and submartingale processes from probability theory~\cite{Williams91} have been extensively used to design fully automated approaches for probabilistic program and model analysis, for properties such as termination and reachability~\cite{ChakarovS13,ChatterjeeFNH18,AgrawalC018,McIverMKK18,ChenH20,AbateGR20,ChatterjeeGMZ22,ChatterjeeGNZZ23,TakisakaOUH21,MajumdarS24,MajumdarS25}, safety~\cite{PrajnaJP07,ChatterjeeNZ17,MathiesenCL23,WangYFLO24,ZhiWLOZ24}, reach-avoidance~\cite{0001LZF21,ZikelicLHC23,ZikelicLVCH23,XUE2025105368,XUE2026112919}, cost analysis~\cite{NgoC018,Wang0GCQS19,ChatterjeeGMZ24}, sensitivity~\cite{WangFCDX20}, and more recently general omega-regular properties~\cite{AbateGR24,HenzingerMSZ25,AbateGR25}. Similarly to our method, many works automate computation of supermartingale and submartingale witnesses by following a template-based synthesis approach. In particular, our algorithm is also related to the work~\cite{ZikelicCBR22} on differential cost analysis in non-probabilistic programs, where the difference in cost usage is analyzed by simultaneously computing polynomial bounds on cost usage in two programs. To the best of our knowledge, our work is the first to use supermartingale and submartingale witnesses, as well as simultaneous reasoning about two expectation bounds of a function, towards analyzing DP.


\section{Conclusion}\label{sec:conclusion}
 
We discuss the key contributions of our method along with its limitations. We present a novel method for $\epsilon$-DP refutation in probabilistic programs by reasoning about expectation mismatch of a non-negative function over program outputs. We introduce a sound and complete proof rule for $\epsilon$-DP refutation based on expectation supermartingale and submartingale
witnesses. Our approach reduces the $\epsilon$-DP refutation problem to finding a non-negative function over outputs with significant expected value difference on two similar inputs, circumventing the need to directly reason about the probability of output events. This enables us to design a fully automated, sound and semi-complete algorithm for synthesizing witnesses for $\epsilon$-DP refutation.

In summary, the key theoretical novelty is to ensure all four desirable properties discussed in Section~\ref{sec:intro}. Moreover, our experiments demonstrate that SuperDP can effectively refute $\epsilon$-DP for a wide range of challenging examples, including those that were beyond the reach of previous methods. This shows the practical potential of our approach. 
The limitations of our work can be summarized as (i) our approach for $\epsilon$-DP refutation is limited to polynomial arithmetic programs and its extension to
other classes of programs remains open; and (ii) our approach is limited to $\epsilon$-DP refutation and its extension to other notions of privacy such as $(\epsilon, \delta)$-DP refutation and to DP verification is another open question. Addressing these are interesting directions of future work.



\begin{acks}
The authors would like to thank Petr Novotn\'y for valuable discussions that helped shape this work. This research was supported by the Singapore Ministry of Education (MOE) Academic Research Fund (AcRF) Tier 1 grant (Proposal ID: 25-SIS-SMU-009), Vienna Science and Technology Fund (WWTF), State of Lower Austria [Grant ID 10.47379/ICT25017], ERC CoG 863818 (ForM-SMArt), and Austrian Science Fund (FWF) 10.55776/COE12.
\end{acks}

\bibliographystyle{ACM-Reference-Format}
\bibliography{bibliography}

\pagebreak
\appendix
\section{Proof of Theorem \ref{thm:ref-sound-complete}} \label{app:ref-sound-complete:proof}
\begin{theorem*}[Soundness and completeness]
Given a pCFG $\pCFG=(\locs,\pvars,\Vout,\locinit,\thetainit,\transitions,\guards,\updates)$, a similarity relation $\similar$ on $\thetainit$, a privacy budget $\epsilon \geq 0$, and an invariant $I$ in $\pCFG$, the pCFG $\pCFG$ is {\em not} $\eps$-DP {\em if and only if} there exists an expectation super/sub-martingale witness for $\epsilon$-DP refutation $(\vecinit^1, \vecinit^2, f, U_f, L_f)$.
\end{theorem*}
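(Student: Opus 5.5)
The plan is to prove the two directions separately, using Theorem~\ref{thm:esm-soundness} for soundness and an explicit construction of $L_f = U_f$ from the ``true'' expected-value function for completeness.

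\emph{Soundness.} Suppose a witness $(\vecinit^1,\vecinit^2,f,U_f,L_f)$ exists. By (R3), (R4) and Theorem~\ref{thm:esm-soundness}, we get
\[
\E_{\val\sim\mu_{\vecinit^1}}[f(\val^\out)] \geq L_f(\locinit,\vecinit^1)+f(\vecinit^{1\,\out})
\]
and
\[
U_f(\locinit,\vecinit^2)+f(\vecinit^{2\,\out}) \geq \E_{\val\sim\mu_{\vecinit^2}}[f(\val^\out)].
\]
Combining these with (R5) gives $\E_{\mu_{\vecinit^1}}[f] > e^\eps\,\E_{\mu_{\vecinit^2}}[f]$.

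Assume for contradiction that $\pCFG$ is $\eps$-DP, so $\mu_{\vecinit^1}(A)\le e^\eps\mu_{\vecinit^2}(A)$ for every Borel $A$. I would pass from sets to non-negative measurable functions in the standard way: the inequality extends by linearity to non-negative simple functions, and then to arbitrary non-negative measurable $f$ by approximating with increasing simple functions and applying monotone convergence on both sides. (If $\E_{\mu_{\vecinit^2}}[f]=\infty$ the inequality is trivial; the finiteness in OST-soundness rules this out anyway.) This yields $\E_{\mu_{\vecinit^1}}[f]\le e^\eps\E_{\mu_{\vecinit^2}}[f]$, a contradiction. Non-negativity (R2) is exactly what licenses the monotone convergence step.

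\emph{Completeness.} Suppose $\pCFG$ is not $\eps$-DP. Then there are $\vecinit^1\similar\vecinit^2$ and a Borel $A$ with $\mu_{\vecinit^1}(A)>e^\eps\mu_{\vecinit^2}(A)$.

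1. Take $f=\mathbb{1}_A$, which is non-negative and Borel, so (R1) and (R2) hold.

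2. Define a single state function $\eta$ by
\[
\eta(\loc,\val)=\probm_{(\loc,\val)}[\text{output}\in A]-\mathbb{1}_A(\val^\out),
\]
where the probability is that of the Markov chain started in $(\loc,\val)$. Put $\eta(\locterm,\cdot)=0$; this is consistent with the formula, since from a terminal state the output is $\val^\out$ itself.

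3. Set $U_f=L_f=\eta$. Then $\eta+f$ is the value function $V(\loc,\val)=\probm_{(\loc,\val)}[\text{output}\in A]$. By the Markov property, $V$ is harmonic: $V(\loc,\val)=\sum_{\loc'}\prob(\loc')\E[V(\loc',\val')]$ under each enabled transition. Hence both the expected $f$-decrease and the expected $f$-increase conditions hold with equality on all states, and in particular on $I$.

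4. Check OST-soundness via (C2). Here $Y_i=V(\loc_i,\val_i)\in[0,1]$, so $|Y_i|<2$ uniformly, and integrability is immediate.

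5. Check (R5). We have $L_f(\locinit,\vecinit^1)+f(\vecinit^{1\,\out})=V(\locinit,\vecinit^1)=\mu_{\vecinit^1}(A)$, and the same identity holds for input $2$. The strict inequality therefore holds by the choice of $A$.

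\emph{Main obstacle: measurability of $\eta$.} The definition of a state function requires each $\eta(\loc)$ to be Borel-measurable in $\val$. I would argue that $(\loc,\val)\mapsto\probm_{(\loc,\val)}[E]$ is measurable for every measurable set of runs $E$. This is standard for Markov chains built from measurable kernels, via the Ionescu-Tulcea construction and a monotone class argument over cylinder sets. The point to check is that our transition kernels are measurable, which follows from the guards, updates and distributions being Borel.

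A secondary point is well-definedness of $V$ off the reachable set, since $\thetainit$ only covers initial states. This is handled by defining the chain from an arbitrary state $(\loc,\val)$; the kernel makes sense everywhere. Almost-sure termination is only needed at states reachable from inputs, and harmonicity of $V$ does not depend on termination at all.
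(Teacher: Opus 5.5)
Your proposal is correct and follows essentially the same route as the paper. For soundness, you lift the set-wise DP inequality to non-negative $f$ via simple functions and monotone convergence (the paper uses the supremum over partitions instead) and then apply Theorem~\ref{thm:esm-soundness}; for completeness, you take $f$ to be the indicator function of $A$, set $U_f=L_f$ to the probability of outputting into $A$ minus $f$, and get OST-soundness via (C2).

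Two of your details are slightly more careful than the paper. For (R5) you use the direct identity $L_f(\locinit,\vecinit^1)+f(\vecinit^{1\,\out})=\mu_{\vecinit^1}(A)$, whereas the paper cites Theorem~\ref{thm:esm-soundness} at that step, and its one-sided bounds alone would not give the strict inequality. You also address Borel-measurability of the value function, which the paper does not discuss.
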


\begin{proof}[Proof.] We prove the theorem claim by proving soundness and completeness separately.
\begin{compactenum}
\item {\em Soundness.} Suppose that the pCFG admits an expectation super/sub-martingale witness for $\epsilon$-DP refutation $(\vecinit^1, \vecinit^2, f, U_f, L_f)$. To establish soundness, we need to prove that the pCFG $\pCFG$ is not $\epsilon$-DP. We prove this claim by contradiction. 

Suppose that, on the contrary, $\pCFG$ is $\epsilon$-DP. Then, by Definition~\ref{def:diffprivacy}, since $\vecinit^1 \similar \vecinit^2$ are similar inputs, it must hold that for every event $A \subseteq \R^{\pvars_\out}$ we have $\mu_{\val_1}[A] \leq e^\epsilon \mu_{\val_2}[A]$. Therefore, if we compare the expected value of the function $f$ on the pCFG output given these two inputs (recall, $f$ is an element of our $\epsilon$-DP refutation $(\vecinit^1, \vecinit^2, f, U_f, L_f)$), by the definition of Lebesgue integration we have
\begin{equation}\label{eq:eq1}
    \begin{split}
         \underset{\val \sim \mu_{\val_1}}{\E}{[f(\val^\out)]} =&  \int_{\R^{\pvars_\out}} f(\val) d\mu_{\val_1}(\val) \\
         =& \sup \{\sum_{i=1}^k a_i \cdot \mu_{\vecinit^1}(B_i) \mid \{B_1, \dots, B_k\} \text{ is a partition of } \R^{\pvars_\out} \wedge a_i = \inf_{\val \in B_i} f(\val)\} \\ 
         \leq & \sup \{\sum_{i=1}^k a_i \cdot e^\eps \mu_{\vecinit^2}(B_i) \mid \{B_1, \dots, B_k\} \text{ is a partition of } \R^{\pvars_\out} \wedge a_i = \inf_{\val \in B_i} f(\val)\} \\ 
         =& e^\eps \underset{\val \sim \mu_{\val_2}}{\E}{[f(\val^\out)]}.
    \end{split}
\end{equation}
The inequality in the third line above holds for each partition $\{B_1,\dots,B_k\}$ of $\R^{\pvars_\out}$ since each $a_i \geq 0$ due to the non-negativity of $f$ and since we established that $\mu_{\val_1}[A] \leq e^\epsilon \mu_{\val_2}[A]$ holds for every output event $A \subseteq \R^{\pvars_\out}$, hence the inequality also holds between the two suprema over all possible partitions.

On the other hand, since $(\vecinit^1, \vecinit^2, f, U_f, L_f)$ is assumed to be an expectation super/sub-martingale witness for $\epsilon$-DP refutation, by condition (R3) and (R4) in Definition~\ref{def:proofrule} we know that $U_f$ is a UESM for $f$, $L_f$ is an LESM for $f$, and $(\pCFG,U_f,f)$ and $(\pCFG,L_f,f)$ are OST-sound triples, hence by Theorem~\ref{thm:esm-soundness} it holds that 
\begin{equation*}
\begin{split}
    L_f(\locinit,\vecinit^1) + f(\vecinit^{1 \, \out}) \leq &\underset{\val \sim \mu_{\vecinit^1}}{\E}{[f(\val^\out)]}, \\
    &\underset{\val \sim \mu_{\vecinit^2}}{\E}{[f(\val^\out)]} \leq U_f(\locinit,\vecinit^2) + f(\vecinit^{2 \, \out}).
\end{split}
\end{equation*}
Combining these with eq.~\eqref{eq:eq1}, we conclude that
\[ L_f(\locinit,\vecinit^1) + f(\vecinit^{1 \, \out}) \leq e^\eps \left(U_f(\locinit,\vecinit^2) + f(\vecinit^{2 \, \out})\right). \]
But this contradicts condition (R5) in Definition~\ref{def:proofrule}. Hence, we reach contradiction and the soundness claim in the theorem follows.

\item {\em Completeness.} Suppose now that the pCFG $\pCFG$ is not $\epsilon$-DP. To establish completeness, we need to prove that there exists an expectation super/sub-martingale witness for $\epsilon$-DP refutation.

Since $\pCFG$ is not $\epsilon$-DP, by Definition~\ref{def:diffprivacy} there exist similar inputs $\vecinit^1 \similar \vecinit^2$ and an output event $A \subseteq \R^{\pvars_{\out}}$ such that $\mu^\pCFG_{\vecinit^1}(A) > e^\eps \mu^\pCFG_{\vecinit^2}(A)$. Let $f: \R^{\pvars_{\out}} \rightarrow \mathbb{R}$ be the indicator function of $A$, i.e. 
\[ f(\val^\out) = \begin{cases}
    1, &\text{if } \val^\out \in A \\
    0, &\text{if } \val^\out \notin A.
\end{cases} 
\]
Furthermore, let $L_f$ and $U_f$ be state functions in the pCFG $\pCFG$ defined via
    \[
    \begin{split}
        U_f(\loc,\val) = L_f(\loc,\val) = \begin{cases}
            \probm^{\pCFG(\loc,\val)}[\Output(A)] - f(\val^\out), &\text{if } \loc \neq \locterm \\ 
            0, &\text{if } \loc = \locterm
        \end{cases}
    \end{split}
    \]
    where we use $\pCFG(\loc,\val)$ to denote the pCFG identical to the $\pCFG=(\locs,\pvars,\Vout,\locinit,\thetainit,\transitions,\guards,\updates)$ with the initial location $\locinit = \loc$ and the initial variable valuation set $\thetainit = \{\mathbf{x}\}$, and $\probm^{\pCFG(\loc,\val)}$ denotes the probability measure associated to this pCFG.

    We claim that the tuple $(\vecinit^1, \vecinit^2, f, U_f, L_f)$ is an expectation super/sub-martingale witness for $\epsilon$-DP refutation. To prove this, we need to show that it satisfies all the conditions (R1)-(R5) in Definition~\ref{def:proofrule}: 
    \begin{compactenum}
    \item[(R1)] $\vecinit^1 \similar \vecinit^2$ are similar inputs by our choice of $\vecinit^1,\vecinit^2,A$, hence (R1) is satisfied.
    \item[(R2)] $f$ is an indicator function of the output event $A$, hence it is Borel-measurable and non-negative and (R2) is satisfied.
    \item[(R3) and (R4)] We show that $L_f$ is an LESM for $f$ and that the triple $(\pCFG,L_f,f)$ is OST-sound. The proof for $U_f$ is analogous. 
    To see that $L_f$ is an LESM for $f$, we need to show that it satisfies Zero on output and Expected $f$-decrease conditions in Definition~\ref{def:lesm}. The Zero on output condition is satisfied since, for each output state $(\locterm,\mathbf{x})$, we have $\probm^{\pCFG(\locterm,\val)}[\Output(A)] = f(\val^\out) = 1$ if $\mathbf{x} \in A$ and $\probm^{\pCFG(\locterm,\val)}[\Output(A)] = f(\val^\out) = 0$ otherwise. Hence, $L_f(\locterm,\val) = \probm^{\pCFG(\locterm,\val)}[\Output(A)] - f(\val^\out) = 0$ for all terminal states. On the other hand, to see that the Expected $f$-decrease condition is satisfied, note that by our definition of $L_f$ above, for each $\loc \in \locs \setminus \{\locterm\}$  and $\val \in \R^{\pvars}$ it holds that $L_f(\loc,\val)+f(\val^\out)$ is equal to the probability of reaching an output in $A$ in $\pCFG$ starting from the state $(\loc,\val)$. Therefore, for every enabled transition $\tau = (\loc,\prob)$, we have
    \[
        L_f(\loc,\val) + f(\val^\out) = \sum_{\loc' \in \locs} \prob(\loc') \E[L_f(\loc', \val') + f(\val'^\out)]
    \]
    due to the standard flow conservation of reachability probabilities, and the Expected $f$-decrease condition is satisfied. This shows that $L_f$ is an LESM for $f$ in $\pCFG$. Finally, the OST-soundness of the triple $(\pCFG,L_f,f)$ follows from the fact that $L_f(\loc,\val)+f(\val^\out) \in [0,1]$, hence condition $(C2)$ of OST-soundness is satisfied. This concludes the proof that the condition (R3) is satisfied, and the proof for $U_f$ and condition (R4) is analogous.
    \item[(R5)] Finally, by our choice of $\vecinit^1,\vecinit^2,A$, we have that $\mu^\pCFG_{\vecinit^1}(A) > e^\epsilon \mu^\pCFG_{\vecinit^2}(A)$. Hence, as $f$ is the indicator function of $f$, we also have $\E_{\val \sim \mu_{\vecinit^1}}{[f(\val^\out)]} > e^\epsilon \E_{\val \sim \mu_{\vecinit^2}}{[f(\val^\out)]}$. Therefore, since we already proved above that $L_f$ is an LESM for $f$ and that $U_f$ is a UESM for $f$, by Theorem~\ref{thm:esm-soundness} we conclude that
    \[ L_f(\locinit,\vecinit^1)+f(\vecinit^{1 \, \out}) > e^\epsilon \left( U_f(\locinit,\vecinit^2)+f(\vecinit^{2 \, \out}) \right) \]
    hence condition (R5) is satisfied.
    \end{compactenum}
    Hence, $(\vecinit^1, \vecinit^2, f, U_f, L_f)$ is indeed an expectation super/sub-martingale witness for $\epsilon$-DP refutation and this concludes our completeness proof.
\end{compactenum}
\end{proof}

\begin{remark}[Non-negativity of $f$]
    We remark that the non-negativity of $f$ in condition (R2) is necessary for our proof rule to be sound. 
    Specifically, it is used when deriving the inequality in eq.~\eqref{eq:eq1}, as otherwise $a_i$ might be negative and the inequality need not hold. 
    As a concrete example showing the necessity of (R2), consider the randomized response mechanism in Fig. \ref{fig:RR-1} with $f(\out) = -\out+\frac{1}{4}$. This function does not satisfy (R2) since $f(1) = -\frac{3}{4}$. However, $\underset{\val \sim \mu_{(0,0)}}{\E}{[f(\out)]} =0$ and $\underset{\val \sim \mu_{(1,0)}}{\E}{[f(\out)]} = - \frac{1}{2}$ which satisfies $\underset{\val \sim \mu_{(0,0)}}{\E}{[f(\out)]} > e^\epsilon \cdot \underset{\val \sim \mu_{(1,0)}}{\E}{[f(\out)]}$ for all $\epsilon \geq 0$. Hence, without (R2), one could find a witness refuting 0-DP for this example, which is incorrect because the randomized response mechanism in Fig. \ref{fig:RR-1} is known to be $(\ln 3)$-DP as shown in Example~\ref{ex:dpoverview}.
\end{remark}

\section{U/LESM and the Expectation Super/sub-martingale Witness Examples the Histogram Mechanism}\label{app:ulesmhistogram}

\begin{example}[Witness for DP refutation of the histogram mechanism]
    Consider the histogram mechanism in Fig. \ref{fig:histogram} for $n=1$ with $f(q_0) = 2q_0^2 + 393.92 q_0^4$. The following table shows an example of a UESM $U_f$ and an LESM $L_f$ for this mechanism (rounded to two decimal places):
    \begin{center}
    \begin{tabularx}{\textwidth}{c|c}
        Label & UESM \\
        \hline 
        $l_1$ & $9458.01 +  4727.0 \cdot q_0^2$ \\
        $l_2$ & $0.01 +  4.01\cdot q_0 \eta  + 1575.67 \cdot q_0 \eta^3 +2363.5\cdot q_0^2 \eta^2  + 1575.67 \cdot q_0^3 \eta  + 2.0\cdot* \eta^2  + 393.92\cdot \eta^4$ \\
        $l_t$ & $0$ \\
    \end{tabularx} \\
    \begin{tabularx}{\textwidth}{c|c}
        Label & LESM \\
        \hline 
        $l_1$ & $9458.01 +  4727.0 \cdot q_0^2$ \\
        $l_2$ & $4.01\cdot q_0 \eta  + 1575.67 \cdot q_0 \eta^3 +2363.5\cdot q_0^2 \eta^2  + 1575.67 \cdot q_0^3 \eta  + 2.0\cdot* \eta^2  + 393.92\cdot \eta^4$ \\
        $l_t$ & $0$ \\
    \end{tabularx}
    \end{center}
    It can be computed that for input $(q_{0\,\mathit{in}}^1,\eta_\mathit{in}^1) = (1.5,0)$ it holds that  $f(1.5)+U_f(l_1,(1.5,0)) \approx 22092.48$ and for the second input $(q_{0\,\mathit{in}}^2,\eta_\mathit{in}^2) = (2.5,0)$ it holds that $f(2.5)+L_f(l_1,(2.5,0)) \approx 54401.76$. Finally, we see that $54401.76 > e^{0.9} \cdot 22092.48$ which means $\big((1.5,0),(2.5,0), f, U_f, L_f\big)$ is a valid instance of an expectation super/sub-martingale witness for refuting $0.9$-DP in the histogram mechanism.
\end{example}

\section{Constraints for OST-Soundness of U/LESMs}\label{app:algoost}

In Step~2 of our algorithm in Section~\ref{sec:algo}, we are collecting constraints over the symbolic template variables that make $(\vecinit^1,\vecinit^2,f,U_f,L_f)$ a valid instance of an expectation super/sub-martingale witness for $\epsilon$-DP refutation. All the defining constraints are described in Section~\ref{sec:algo} except for the OST-soundness constraints on $U_f$ and $L_f$. As stated in Definition~\ref{def:ost}, OST-soundness can be imposed by any of the conditions (C1)-(C4). To this end, we assume an additional user provided argument that chooses one of the OST-soundness conditions to be imposed by our method. Below we recall from \cite{ChatterjeeGNZ24} how each one of (C1-4) conditions are imposed on $U_f$. The discussion is analogous for $L_f$. 

\begin{itemize}
    \item[(C1)] This condition can be applied when the pCFG $\pcfg$ has finite termination time, i.e. there exists $c>0$ such that $\timeterm_\rho \leq c$ for all runs $\rho$ of $\pcfg$. To impose this condition, we can restrict our attention to programs where all loops are statically bounded. There are no additional constraints imposed on $U_f$. 
    \item[(C2)] This condition requires the existence of a constant $c>0$ such that $|U_f(\loc,\val)+f(\val^\out)| \leq c$ for all reachable configurations $(\loc,\val)$. To impose this condition, we introduce a new unknown variable $c$, collect the constraint $c>0$ and the following constraint for every $\loc \in \locs$:
    \[
    \forall \val \in \R^\pvars. \, \val \models I(\loc) \Rightarrow |U_f(\loc,\val) + f(\val^\out)| \leq c.
    \]
    \item[(C3)] This condition has two requirements: (i) $\pcfg$ should have bounded expected termination time, i.e. $\E[\timeterm] <\infty$, and (ii) the sum $U_f(\loc,\val)+f(\val^\out)$ should have bounded expected one-step differences. The former can be verified by synthesizing a ranking supermartingale (RSM) \cite{ChakarovS13,ChatterjeeFG16,ChatterjeeFNH18} for $\pcfg$ prior to running our algorithm. To impose the latter, we introduce a new unknown variable $c$, add $c>0$ to the collected constraints and collect the following additional constraints for every $\tau =(\loc,\Pr) \in \transitions$ and $\loc' \in \support(\Pr)$:
    \[
        \forall \val \in \R^\pvars. \, \val' \in \support(\updates(\tau)), \val \vDash I(\loc) \wedge \guards(\tau) \Rightarrow |U_f(\loc,\val) + f(\val^\out) - U_f(\loc',\val') - f(\val'^\out)| \leq c.
    \]
    Note that this imposes a stronger condition than (C3), this is because our goal is to collect constraints in the form of $[\forall \val \in \R^n. \, P(\val) \Rightarrow Q(\val)]$ where $P$ is a conjunction of linear inequalities over $\pvars$ and $Q$ is a polynomial inequality over $\pvars$, and (C3) cannot directly be imposed by this kind of constraints. 
    \item[(C4)] This condition requires the existence of real numbers $M,c_1,c_2,N$ such that (i) for all sufficiently large $n \in \N$, it holds $\mathbb{P}[\timeterm >n] \leq c_1 \cdot e^{-c_2\cdot n}$, and (ii) the one step difference in $U_f(\loc,\val)+f(\val^\out)$ is bounded by $M \cdot n^d$. It was shown in \cite{Wang0GCQS19} that this condition can be verified by synthesizing an RSM similar to (C3). 
\end{itemize}

\section{Example Run of Our Algorithm: RR-1}
\begin{example}[Step 1: Setting up Templates]
        Consider the randomized response mechanism in Fig. \ref{fig:RR-1} and suppose $D=2$. In its first step, our approach fixes templates for $\vecinit^1$ as $(x_{in}^1,\out_{in}^1)$, for $\vecinit^2$ as $(x_{in}^2,\out_{in}^2)$, and the following for $f$:
        \[
        f(\out) = c_0 + c_1 \cdot \out + c_2 \cdot \out^2.
        \]
        Furthermore, the templates for $U_f$ and $L_f$ are as follows:
        \begin{align*}
            U_f(\loc,x,\out) = \begin{cases}
                             s_{\loc,0} + s_{\loc,1} \cdot x + s_{\loc,2} \cdot x^2 + s_{\loc,3} \cdot \out \cdot x + s_{\loc,4} \cdot \out + s_{\loc,5} \cdot \out^2 & \loc \neq \locterm\\ 
                            0 & \loc = \locterm
                            \end{cases} \\
            L_f(\loc,x,\out) = \begin{cases}
                             t_{\loc,0} + t_{\loc,1} \cdot x + t_{\loc,2} \cdot x^2 + t_{\loc,3} \cdot \out \cdot x + t_{\loc,4} \cdot \out + t_{\loc,5} \cdot \out^2 & \loc \neq \locterm\\ 
                            0 & \loc = \locterm
                            \end{cases}
        \end{align*}
\end{example}
\begin{example}[Step 2: Constraint Collection]
    Let us assume that after running an invariant generator, the invariant $0 \leq x,\out \leq 1$ is generated for every location of the pCFG, i.e. that $I(\loc) \equiv [0 \leq x,\out \leq 1]$ for all $\loc \in \locs$. After fixing the symbolic templates for $f, U_f$ and $L_f$, the algorithm collects the following constraints as part of its second step: 
    \[\begin{split}
        &(R1)~~~~~\out_{in}^1 = \out_{in}^2 = 0 \wedge |x_{in}^1 - x_{in}^2| \leq 1 \\
        &(R2)~~~~~[\forall \out \in \R. \, 0 \leq \out \leq 1 \Rightarrow c_0 + c_1 \cdot \out + c_2 \cdot \out^2 \geq 0]\\ 
        &(R3)~~~~~\big[\forall x, \out \in \R. \, 0 \leq x,\out \leq 1 \Rightarrow   \\
        & \qquad \qquad \qquad  L_f(l_1,x,\out) + f(\out) \leq \frac{1}{2}\big(L_f(l_2,x,out) + f(out)\big) + \frac{1}{2}\big(L_f(l_3,x,out) + f(out)\big) \big]\\
        &(R4)~~~~~\big[\forall x, \out \in \R. \, 0 \leq x,\out \leq 1 \Rightarrow U_f(l_4,x,\out) + f(\out) \leq U_f(l_t,x,0) + f(0)\big]\\
        &(R5)~~~~~L_f(\locinit,x_{in}^1,\out_{in}^1) + f(\out_{in}^1) > e^\epsilon \left( U_f(\locinit,x_{in}^2,\out_{in}^2) + f(\out_{in}^2) \right) 
    \end{split}\]
    where all occurrences of $f, U_f$ and $L_f$ are replaced by their corresponding symbolic templates.
\end{example}

\section{Proof of Theorem \ref{thm:algo-sound-complete}}\label{app:algoproof}

\begin{proof} We prove each part of the theorem separately: 
    \begin{compactitem}
        \item[\textit{Soundness.}] Suppose our algorithm terminates and returns a tuple $(\vecinit^1,\vecinit^2,f,U_f,L_f)$. We show that this tuple is indeed a valid instance of an expectation super/sub-martingale witness for $\epsilon$-DP refutation that satisfies conditions (R1)-(R5) in Definition~\ref{def:proofrule}. Given the correctness of the constraint simplification in Step 3 \cite{AsadiC0GM21}, the returned tuple is a solution of the set of constraints collected in Step 2. This means that (R1)-(R5) are all satisfied due to the collected constraints in Step 2 of our algorithm. Specifically, the OST-soundness of $U_f$ and $L_f$ is followed by the correctness of the algorithm in \cite{ChatterjeeGNZ24}, since we directly use their method for encoding OST-soundness. 
        \item[\textit{Conditional semi-completeness.}] The system of constraints collected up to Step 2 of the algorithm is satisfiable if and only if a tuple $(\vecinit^1,\vecinit^2,f,U_f,L_f)$ exists that satisfies the conditions in Definition~\ref{def:proofrule}. Given that the invariant $I$ defines a closed and bounded set, it follows that the translation via Handelman's theorem in Step 3 is sound and complete for a fixed polynomial degree $D$~\cite{AsadiC0GM21,ChatterjeeFG16}. Finally, the constraint solving of Step 4 is sound and complete, implying the conditional semi-completeness claim.   
        \item[\textit{Complexity.}] Suppose the parameter $D$ is fixed. The first step requires linear time over the size of the pCFG, since a polynomial template of degree $D$ is fixed for every location of the pCFG. The number of constraints collected in Step 2 is linear in the size of the similarity relation and the size of the pCFG. Each of these constraints is of polynomial size in the size of the pCFG. As shown in \cite{ChatterjeeGNZ24}, the translation via Handelman's theorem in Step 3 increases the size of the constraint system by a polynomial factor. Lastly, in Step 4, a system of non-linear inequalities needs to be solved in real arithmetic (an instance of existential theory of reals), which lies in PSPACE. This concludes that the total complexity of our algorithm is in PSPACE as claimed in the theorem statement.
    \end{compactitem}
\end{proof}

\pagebreak
\section{Details of Benchmarks in Section \ref{sec:experiments}} \label{app:benchmarks}
\begin{figure}[h!]
	\centering
	\begin{subfigure}[h]{0.47\textwidth}
		\begin{lstlisting}[frame=single,numbers=none]
Input: $x \in \{0,1\}$
Sim: $out^1 = out^2 = 0 \wedge |x^1 - x^2|\leq 1$
$l_1$:  if prob(0.6):
$l_2$:     $out$ := $x$
   else:
$l_3$:     $out$ := $1-x$
$l_t$:
Output: $out$
		\end{lstlisting}
        \caption{A variant of the randomized response mechanism (\texttt{RR-2}) which satisfies $(\ln 1.5)$.}
        \label{fig:RR-2}
	\end{subfigure}%
\hfill
\begin{subfigure}[h]{0.47\textwidth}
    \begin{lstlisting}[frame=single,numbers=none]
Input: $x \in [0,1]$
Sim: $out^1 = out^2 = 0 \wedge |x^1 - x^2|\leq 1$
$l_1$:  if prob($x$):
$l_2$:     $out$ := 1
   else:
$l_3$:     $out$ := 0
$l_t$:
Output: $out$
        \end{lstlisting}
    \caption{The \texttt{PrivBernoulli-1} benchmark. Provides no DP guarantees. By replacing the $[0,1]$ bound of the input with $[\frac{1}{3}, \frac{2}{3}]$, the \texttt{PrivBernoulli-2} benchmark is constructed which satisfies $(\ln 2)$-DP.}
    \label{fig:privbernoulli}
	\end{subfigure}%
    \caption{Details of \texttt{RR-2} and \texttt{PrivBernoulli} benchmarks of Section~\ref{sec:experiments}}
\end{figure}
\begin{figure}[h]
\begin{subfigure}[t]{0.47\textwidth}
		\begin{lstlisting}[frame=single,numbers=none]
Input: $x \in [0,1]$
Sim: $out^1 = out^2 = 0 \wedge |x^1 - x^2|\leq 1$
$l_1$:  if $x$==$1$:
$l_2$:     if prob($1-10^{-6}$):
$l_3$:        $out$ := 0
      else:
$l_4$:        $out$ := 0
   else: 
$l_5$:     $out$ := 0
$l_t$:
Output: $out$
            \end{lstlisting}
        \caption{The \texttt{lowprob} benchmark. Provides no DP guarantees. Creates a delusion of full differential privacy (0-DP) by almost always returning 0.}
        \label{fig:lowprob}
	\end{subfigure}%
\hfill
\begin{subfigure}[t]{0.47\textwidth}
		\begin{lstlisting}[frame=single,numbers=none]
Input: $q \in \Z$
Sim: $out^1 = out^2 = 0 \wedge |q^1 - q^2|\leq 1$
$l_1$:  $z$ := 0
$l_2$:  $\mathit{sign}$ := flip(0.5)
$l_3$:  while True:
$l_4$:     if prob(0.5):
$l_5$:        $z$ := $z+1$
      else:
$l_6$:        break
$l_7$: $out$ := $out + (-1)^\mathit{sign}\cdot z$
$l_t$:
Output: $out$
            \end{lstlisting}
        \caption{The \texttt{geometric} mechanism which satisfies $\ln 2$-DP.}
        \label{fig:geometric}
	\end{subfigure}%
        \caption{Details of \texttt{lowprob} and \texttt{geometric} benchmarks of Section~\ref{sec:experiments}}
	\label{fig:benchmarks}
\end{figure}


\section{Other Related Works} \label{app:related-works}

\subsection{Verification of DP}
Formal methods for verification of differential privacy have been studied since its introduction in the seminal papers of Dwork et al. \cite{Dwork06,DworkMNS06}. Several works considered questions related to the computational complexity of verifying DP for various classes of programs, e.g. programs with boolean variables \cite{BunGG22}, loop-free programs \cite{GaboardiNP20}, and programs with finite inputs/outputs \cite{BartheCJS020}. Logical frameworks such as probabilistic Hoare logic (pRHL) \cite{BartheGB09} were extended to support verification of DP specifications. Specifically, Barthe et al. \cite{BartheKOB13} introduced approximate pRHL (apRHL) in their machine-checked framework CertiPriv built on top of the Coq theorem prover~\cite{coq}. It was further extended to apRHL$^+$ \cite{BartheGGHS16}, which uses probabilistic lifting and couplings derived by apRHL$^+$ to prove DP. In \cite{BartheGAHKS14}, authors present a method that transforms a given probabilistic program into a non-probabilistic program simulating two executions over similar inputs together with a Hoare specification whose satisfaction implies DP of the input program. These frameworks are all capable of verifying $(\epsilon, \delta)$-DP, however, there are two main issues: (i) none of them are fully automated methods, i.e. they all require human feedback or proof fragments, and (ii) they all require the runs of the input program to be "synchronized", in the sense that runs corresponding to two similar inputs should take the exact same branches (they cannot relate two program runs with different number of random assignments). Recently, Avanzini et al. \cite{AvanziniBDG25} introduced "expectation based relational Hoare logic" (eRHL) that relaxes the synchronization requirement and provides soundness and completeness results for verification of DP. Lastly, in \cite{FarinaCG21}, authors combine the ideas of approximate probabilistic coupling and symbolic execution to prove and refute DP.

In a separate line of work, considering pure $\epsilon$-DP, LightDP \cite{ZhangK17} introduces an imperative language with a relational type system for writing provably privacy-preserving programs. The technical idea behind LightDP was to construct a "randomness alignment" between the random variables of any two neighboring runs so that they would return the same output. LightDP was presented as a step towards automation of DP verification, since it requires less manual effort compared to the logical frameworks mentioned earlier. However, it was still not fully automated and required the synchronization constraint mentioned before. Later, ShadowDP \cite{WangDWKZ19} relaxed the synchronization constraint by considering shadow executions that might take different branches. Lastly, CheckDP \cite{WangDKZ20} uses a template-based method together with a verify-invalidate loop for finding such alignments and provides automation as well. 

\end{document}